\newtheorem{thm}{Theorem}
\newtheorem{lem}{Lemma}
\newtheorem{cor}{Corollary}
\newenvironment{keyword}{\par{\noindent\bf Keywords:}}
\begin{document}

%*******TITLE AND AUTHORS*******************************************
\title{Using the WOWA operator in robust discrete optimization problems}

\author{Adam Kasperski\thanks{Corresponding author}\\
   {\small \textit{Department of Operations Research,}}
  {\small \textit{Wroc{\l}aw University of Technology,}}\\
  {\small \textit{Wybrze{\.z}e Wyspia{\'n}skiego 27,}}
  {\small \textit{50-370 Wroc{\l}aw, Poland,}}
  {\small \textit{adam.kasperski@pwr.edu.pl}}
  \and
  Pawe{\l} Zieli{\'n}ski\\
    {\small \textit{Department of Computer Science (W11/K2), }}
  {\small \textit{Wroc{\l}aw University of Technology,}}\\
  {\small \textit{Wybrze{\.z}e Wyspia{\'n}skiego 27,}}
  {\small \textit{50-370 Wroc{\l}aw, Poland,}}
  {\small \textit{pawel.zielinski@pwr.edu.pl}}} 
  
  \date{}
    
\maketitle

\begin{abstract}
In this paper a class of discrete optimization problems with uncertain costs is discussed. The uncertainty is modeled by introducing a  scenario set containing a finite number of cost scenarios. A probability distribution 
over the set of scenarios
is available. In order to choose a solution the weighted OWA criterion (WOWA) is applied. This criterion allows decision makers to take into account both probabilities for scenarios and the degree of pessimism/ optimism. In this paper the complexity of the considered 
class of discrete optimization problems
 is described  and some exact and approximation algorithms for solving it are proposed. 
 Applications to a selection  and the assignment problems, together with results of computational tests are shown.
\end{abstract}
\begin{keyword} 
 robust optimization, weighted OWA, computational complexity, approximation algorithms
 \end{keyword}

\section{Introduction}

Most practical decision making problems arise in a risky or uncertain environment, which means that an outcome of each decision is unknown and depends on a state of the world, which may occur with some positive probability. 
If  probabilities for the  states of the world are available, then each decision leads to a \emph{lottery}, i.e. a probability distribution over the set of all possible outcomes. A decision problem can  then be reduced to establishing 
an ordering  of the set of lotteries. According to the classic expected utility theory by von Neumann and Morgenstern~\cite{VNM53, LR57}, 
the decision maker can assign an utility to each outcome, if
he  accepts some simple and appealing axioms.
He can then compute an expected utility of each lottery and choose a decision which leads to a lottery with the largest expected utility. 

The expected utility can be seen as a weighted average of  outcomes, where the weight of each outcome is just the probability of obtaining it. Thus, in the von Neumann and Morgenstern theory, the weights are independent of the outcomes and other probabilities of the lottery. However, it has been observed in human behavior that this assumption is often violated (see~\cite{DW01} for a deeper discussion on this topic). Many decision makers pay more attention to unfavorable outcomes and would assign larger weights to such outcomes. In such a situation the weight of each outcome  depends not only on its probability, but also on its rank in the lottery. Such weights may better reflect the pessimism/optimism of decision makers. A theory of such rank dependent, transformed probabilities was introduced by  Quiggin~\cite{Q82} (see also~\cite{ST00}).

In many practical situations  the probabilities of scenarios are not available. We then obtain a decision problem under uncertainty. In this case, decision makers may assign subjective probabilities to scenarios~\cite{SA54} and compute the expected utility with respect to these subjective probabilities. However, determining the subjective probabilities may be not an easy task. An alternative approach is to apply some decision criteria such as the min-max, min-max regret, Hurwicz, or Laplace ones. In particular, in the Laplace criterion we apply the principle of insufficient reason and assign equal probability to each scenario. Each decision is then evaluated as the average utility of all possible outcomes. For a deeper discussion on decision making under uncertainty and description of the criteria we refer the reader to~\cite{LR57}.

In this paper we discuss a class of discrete optimization problems, in which a finite set of feasible solutions is specified.  In the deterministic case a cost of each solution is known and a decision problem consists in choosing a  solution
 with the minimum cost. Discrete optimization problems are often represented as integer programming ones, in which the set of feasible solutions is described in compact form by a system of constraints. A class of deterministic discrete optimization problems was described, for example, in~\cite{PS82}. In many practical situations, the cost of each solution is unknown and depends on a state of the world which may occur with some positive probability.
Each state of the world induces a cost scenario.
 A \emph{scenario set} containing all possible cost scenarios
 is  part of the input. In this paper we assume that this scenario set contains a finite number of explicitly listed scenarios. We also assume that  probabilities for the scenarios are available. Notice, that under uncertainty, the principle of insufficient reason can be applied, which assigns equal probabilities to scenarios~\cite{LR57}.  In order to choose a solution, we will apply the \emph{Weighted Ordered  Weighted Averaging} (WOWA for short) operator, proposed by Torra~\cite{TT11}. Given a solution, this operator allows us to define a rank-dependent weight for this solution under each scenario. This weight can be seen as a distorted scenario probability and the WOWA criterion is then 
 a special case of the Choquet integral with respect to distorted probabilities~\cite{GR11}.  We can evaluate each solution as a weighted average of its costs over all scenarios. The WOWA criterion contains basic criteria used in decision making under risk and uncertainty, such as the expectation (weighted mean), maximum, minimum, Hurwicz, and Laplace ones. Furthermore, if the principle of insufficient reason is applied, then WOWA becomes the OWA criterion proposed by Yager~\cite{YA88}. 

If the uncertainty is represented by a  discrete  uncertainty set, it is common to use 
the robust approach~\cite{KY97} to  compute a solution.
In this approach we assume that decision makers are risk averse and we seek a solution which minimize the cost in 
the worst case. This leads to applying the min-max or min-max regret criteria to choose a solution. 
The traditional robust approach has, however, several drawbacks. The min-max criterion is extremely conservative and it is not difficult to show examples in which it gives unreasonable solutions~\cite{LR57}. In particular, applying this criterion we may get a solution which is not Pareto optimal. Furthermore, the so-called \emph{drowning effect} may also appear~\cite{DF99}. If the costs under some scenario are large in comparison with the costs under the remaining scenarios, then only this bad scenario is taken into account in the process of computing a solution (information connected with the remaining scenarios is ignored). Hence, in many applications a criterion which takes into account all (or at least a subset) of scenarios  is required.
The traditional robust approach assumes also
that no probabilities are available for the scenarios,
which is not always true. By using the WOWA criterion we can overcome this drawback. We can use the information connected with scenario probabilities and soften the very conservative min-max criterion. Furthermore, the WOWA criterion is consistent with the theory of rank-dependent probabilities and, in consequence, can better reflect the real attitude of decision makers towards  risk. This is particularly important when decisions are not repetitious, i.e. they are implemented only once. The WOWA operator allows us to establish a link between the stochastic and robust optimization frameworks. The distorted (rank-dependent) probabilities allows us to establish a trade-off between the expected and the maximum solution costs.

In this paper we focus on the computational properties of the considered problem. Since the maximum criterion is a special case of the WOWA criterion, all  negative results known for the robust min-max problems remain valid if the WOWA criterion is used. Unfortunately, the min-max versions of all basic discrete optimization problems become NP-hard even for two scenarios. This is the case for the shortest path, minimum spanning tree, minimum assignment, minimum cut, or minimum selecting items problems~\cite{KY97, ABV08, AV01}. All these aforementioned problems become strongly NP-hard and also hard to approximate when the number of scenarios is  part of the input~\cite{KZ09, KZ11, KZ13}. Furthermore, when the OWA operator is used to choose a solution, then 
network problems (the shortest path, minimum spanning tree, minimum assignment, minimum cut)
are not at all approximable~\cite{KZ15}. However, for an important case of nondecreasing weights in the OWA operator, there exists an approximation algorithm with some guaranteed worst case ratio and the aim of this paper is to generalize this algorithm to  the more general WOWA criterion. In the existing literature, the OWA operator and the more general Choquet integral have been recently applied to some multiobjective optimization problems in~\cite{GS12, GPS10, FPP14}. In these papers the authors propose some exact methods for solving the problems, which are based on a MIP formulation and a branch and bound method.

This paper is organized as follows. In Section~\ref{secForm}, we present the problem formulation and show a motivation for using WOWA as a criterion for choosing a solution under risk and uncertainty. In Section~\ref{secComp}, we recall some known complexity results for the considered problem. In Section~\ref{secAppr}, we propose an approximation algorithm for solving the problem, which can be applied to a large class of discrete optimization problems. Section~\ref{secMIP} describes a method of constructing a mixed integer programming formulation, which can be used to solve the considered problem exactly. This method will be adopted from~\cite{OS09}. Finally, in Section~\ref{secExp}, we show  applications of the proposed model to a selection and the assignment problems. This section also contains results of computational tests, which describe the efficiency of the MIP formulation and the quality of the solutions that are 
returned by the approximation algorithm designed in Section~\ref{secAppr}.

\section{Problem formulation}
\label{secForm}
Let $E=\{e_1,\dots,e_n\}$ be a finite set of elements and let $\Phi\subseteq 2^E$ be a set of feasible solutions. In a deterministic case, each element $e_i\in E$ has a nonnegative cost $c_i$ and we seek a feasible solution $X\in \Phi$, which minimizes the total cost $F(X)=\sum_{e_i\in X} c_i$. We denote such a deterministic discrete optimization problem by $\mathcal{P}$. This formulation encompasses a wide class of problems (see, e.g.,~\cite{PS82, AH93}). We obtain, for example, a class of network problems by identifying $E$ with edges of a graph $G$ and $\Phi$ with some objects in $G$ such as paths, spanning trees,  matchings, or cuts. Usually, $\mathcal{P}$ is represented as an integer 0-1~programming problem whose constraints describe $\Phi$ in compact form.

Assume that the element costs are uncertain and their values depend on a state of the world which may occur with 
some positive probability. Each such a state of the world induces an element cost scenario (scenario for short) $\pmb{c}_j=(c_{j1},\dots,c_{jn})$. Let scenario set $\Gamma=\{\pmb{c}_1,\dots,\pmb{c}_K\}$ contain $K$ explicitly listed scenarios. Let $\pmb{p}=(p_1,\dots,p_K)$ be
a vector of scenario probabilities, i.e. $p_j$ is the probability of the event that scenario $\pmb{c}_j$ will occur. The cost of  a solution $X$ depends on scenario $\pmb{c}_j\in \Gamma$ and we will denote it by $F(X,\pmb{c}_j)=\sum_{e_i\in X} c_{ji}$.  Choosing a solution $X$ leads to a \emph{lottery}, i.e. a probability distribution $(p_1 F(X,\pmb{c}_1),\dots,p_K F(X,\pmb{c}_K)$) over the costs of $X$ under scenarios in $\Gamma$. In order to choose the best solution, we need to evaluate each lottery. To do this, we should assign a weight to each scenario and compute the weighted average solution cost.  Under the assumption that the weight of the $j$th scenario  is equal to~$p_j$, we obtain that the weighted average is just the expected solution cost, i.e. $f(X)=\mathbf{E}[X]=\sum_{j\in [K]} p_jF(X,\pmb{c}_j)$.
\begin{figure}[ht]
\centering
\includegraphics[scale=.9]{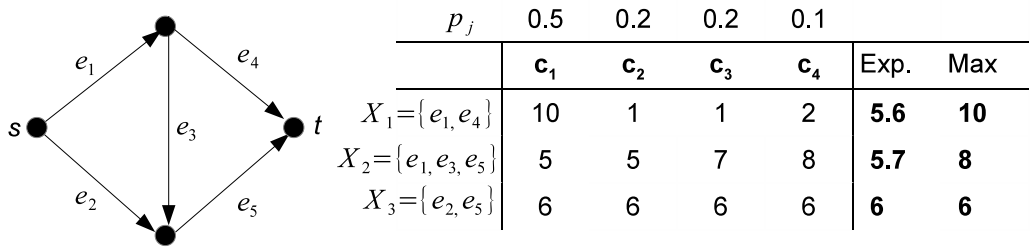}
\caption{A sample \textsc{Shortest Path} problem with four scenarios $\pmb{c}_1=(5,6,0,5,0)$, $\pmb{c}_2=(1,6,4,0,0)$, $\pmb{c}_3=(1,6,6,0,0)$, and $\pmb{c}_4=(2,6,6,0,0)$. The costs of all three paths under all scenarios are shown in the table.}
\label{figex1}      
\end{figure}

Consider the sample \textsc{Shortest path} problem depicted in Figure~\ref{figex1}. The set of elements $E=\{e_1,\dots,e_5\}$, contains~5 arcs of network $G$ and the set of feasible solutions consists of three paths $X_1$, $X_2$,
 and $X_3$ from $s$ to $t$ in $G$. There are 4 costs scenarios with the probabilities $0.5$, $0.2$, $0.2$ and $0.1$, respectively, and the costs of each path under these scenarios are shown in Figure~\ref{figex1}. Path $X_1=\{e_1,e_4\}$ has the smallest expected cost and thus should be chosen when the expected value is used. However, this choice may be unreasonable for some risk averse or pessimistic decision makers. Observe that the probability that the path $X_1$ will have a large cost equal to~10 is equal to~0.5
  which may be too large and cause some decision makers to reject $X_1$. On the other hand, the path $X_3=\{e_2,e_5\}$ has the smallest maximum cost and should be chosen when the min-max criterion is used and the probabilities of scenarios are ignored.  Notice that the path $X_3$ has a deterministic cost equal to~6. 
  However, some decision makers may feel  that path $X_2=\{e_1,e_3,e_5\}$ is better, since the probability that the cost of $X_2$ will be less than~6 is equal to~0.7 and the probability that $X_2$ will have a large cost, equal to~8, is only 0.1. The sample problem illustrates  that there is a need of defining aggregation weights, which would depend not only on the scenario probabilities, but also on the rank positions of the costs of a solution under scenarios. For example, risk averse decision makers would assign a weight larger that 0.5 to scenario $\pmb{c}_1$, when solution $X_1$ is considered.

Before we discuss such a criterion, which fulfills the above requirements, we recall
an aggregation criterion, called the \emph{Ordered Weighted Averaging}  operator (OWA for short)
proposed by Yager in~\cite{YA88}. 
Let $\pmb{w}=(w_1,\dots,w_K)$  be a weight vector such that $w_j\in [0,1]$ for each $j\in [K]$, $\sum_{j\in [K]} w_j=1$
(we use $[K]$ to denote the set $\{1,\dots,K\}$). 
Given a vector of reals $\pmb{a}=(a_1,\dots,a_K)$, let $\sigma$ be a sequence of $[K]$ such that $a_{\sigma(1)}\geq \dots \geq a_{\sigma(K)}$. Then
$$
{\rm owa}_{(\pmb{v})}(\pmb{a})=\sum_{j\in [K]} w_j a_{\sigma(j)}.
$$
The choice of particular weight vectors~$\pmb{w}$
leads to well known criteria in decision making under uncertainty  (see, e.g.,~\cite{KZ15}). Indeed,
if $w_1=1$ and $w_j=0$ for $j=2,\dots,K$, then the OWA criterion becomes the maximum.
If $w_K=1$ and $w_j=0$ for $j=1,\dots,K-1$, then it reduces to the minimum.
More generally, if $w_k=1$ and $w_j=0$ for $j\in [K]\setminus\{k\}$, then the  OWA criterion is the $k$-th largest cost and, in particular, when $k=\lfloor K/2 \rfloor +1$, then the $k$-th largest cost is the median.
If $w_j=1/K$ for all $j\in [K]$, then OWA is the average. Finally, if $w_1=\alpha$ and $w_K=1-\alpha$, for some fixed $\alpha\in [0,1]$, and $w_j=0$ for the remaining weights, then the Hurwicz pessimism-optimism criterion is obtained. 

Using  OWA  it is not easy to take the probabilities of scenarios into account. In particular, the expected value is not a special case of OWA. In the following, we will present an aggregation criterion, called the \emph{Weighted Ordered Weighted Averaging}
operator (WOWA for short)
proposed by Torra~\cite{TT11}.  This criterion generalizes OWA and allows us to define  rank-dependent weights which also depend on scenario probabilities.

Let $\pmb{v}=(v_1,\dots,v_K)$  be a weight vector  such that $v_j\in [0,1]$ for each $j\in [K]$ and $\sum_{i\in [K]} v_j=1$.
Let $w^*$ be  a continuous nondecreasing function on $[0,1]$, $w^*:[0,1]\rightarrow[0,1]$. The 
domain interval $[0,1]$ is partitioned by points $0=\frac{0}{K}<\frac{1}{K}<\frac{2}{K}<\cdots<\frac{K}{K}=1$.
The function~$w^*$ is linear on each subinterval $\left[\frac{j-1}{K},\frac{j}{K} \right]$, $j\in [K]$, 
($w^*$ is piecewise linear function)
and 
satisfies, for a given weight vector~$\pmb{v}$, 
the following equations: $w^*(0)=0$ and $w^*\left(\frac{j}{K}\right)= \sum_{i\leq j} v_i$  for $j\in [K]$. Observe that $w^*$ is uniquely defined by $\pmb{v}$.
In this paper we also make the assumption: $v_1\geq v_2\geq \dots \geq v_K$. Thus
$w^*$ is additionally  a concave function. Figure~\ref{fig1} presents three sample functions $w^*$ for $K=5$
for three weight vectors.
\begin{figure}[ht]
\centering
\includegraphics[scale=.55]{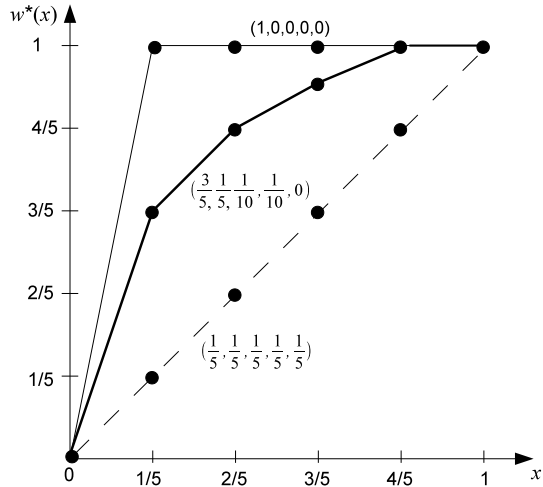}
\caption{Three sample functions $w^*$ for $K=5$}
\label{fig1}      
\end{figure}

 Let $\pmb{p}=(p_1,\dots,p_K)$ be an additional weight vector such that $p_j\in [0,1]$ for each $j\in [K]$, $\sum_{j\in [K]} p_j=1$.
 Given a vector of reals $\pmb{a}=(a_1,\dots,a_K)$, let $\sigma$ be a sequence of $[K]$ such that $a_{\sigma(1)}\geq \dots \geq a_{\sigma(K)}$. Then  the WOWA criterion is defined in the following way~\cite{TT11}:
$${\rm wowa}_{(\pmb{v}, \pmb{p})}(\pmb{a})=\sum_{j\in [K]} \omega_j a_{\sigma(j)},$$
where
$$\omega_j=w^*(\sum_{i\leq j} p_{\sigma(i)})-w^*(\sum_{i<j} p_{\sigma(i)}).$$

The value of $\omega_j$ is a weight assigned to number $a_{\sigma(j)}$. It is not difficult to show that $\omega_j\in [0,1]$ for each $j\in [K]$, and $\sum_{j\in [K]} \omega_j=1$. 
Figure~\ref{fig1} shows two boundary cases of the form of function~$w^*$, which are attained at
 $\pmb{v}^1=(1,0,\dots,0)$ and $\pmb{v}^2=(1/K,\dots,1/K)$, respectively. The vector $\pmb{v}^2$ models the weighted mean, i.e. in this case we get ${\rm wowa}_{(\pmb{v}^2,\pmb{p})}(\pmb{a})=\sum_{j\in [K]} p_j a_j$. The vector $\pmb{v}^1$ models the weighted maximum, which in the case of uniform $\pmb{p}=(1/K,\dots,1/K)$ is the usual maximum operator. 
 It is easily seen that
  for arbitrary $\pmb{v}$ and uniform $\pmb{p}=(1/K,\dots,1/K)$, WOWA  becomes the OWA operator.
  An easy computation  shows that
   the WOWA operator is monotone, i.e. when $\pmb{a}$ and $\pmb{a}'$ are such that $a_j\geq a_j'$ for all $j\in [K]$, then ${\rm wowa}_{(\pmb{v},\pmb{p})}(\pmb{a})\geq {\rm wowa}_{(\pmb{v},\pmb{p})}(\pmb{a}')$. 
Since it is a convex combination of the components of $\pmb{a}$, we have $\min_{j\in [K]} a_j \leq {\rm wowa}_{\pmb{v},\pmb{p}}(\pmb{a}) \leq \max_{j\in [K]} a_j$. Additionally, 
 $\sum_{j\in [K]} p_j a_j \leq {\rm wowa}_{(\pmb{v},\pmb{p})}(\pmb{a}) \leq \max_{j\in [K]} a_j$ holds, when
 $v_1\geq\dots\geq v_K$.

We now apply the WOWA operator to the uncertain problem $\mathcal{P}$ and provide the interpretation of the vectors $\pmb{v}$ and $\pmb{p}$. For a given solution $X\in \Phi$, let us define:
$${\rm WOWA}(X)={\rm wowa}_{(\pmb{v},\pmb{p})}(F(X,\pmb{c}_1),\dots,F(X,\pmb{c}_K)).$$
We thus obtain an aggregated value for $X$, by applying the WOWA criterion to the vector of the costs of $X$ under scenarios in $\Gamma$. Given vectors $\pmb{v}$ and $\pmb{p}$, we consider the following optimization problem:
$$ \textsc{Min-Wowa}~\mathcal{P}:\;\; \min_{X\in \Phi} {\rm WOWA}(X).$$

The vector $\pmb{p}=(p_1,\dots,p_K)$ denotes just the probabilities for scenarios. 
 The vector $\pmb{v}$ models the level of risk aversion (or the degree of pessimism/optimism) of  a decision maker. Namely, the more uniform is the weight distribution in $\pmb{v}$ the less risk averse a decision maker is. In particular, $\pmb{v}^2=(1/K,\dots,1/K)$ means that decision maker is risk indifferent and minimizes the expected solution cost. On the other hand, the vector $\pmb{v}^1=(1,0,\dots,0)$ and the uniform vector $\pmb{p}=(1/K,\dots,1/K)$
mean that the decision maker is extremely risk averse and minimizes the solution cost assuming that the worst scenario for the computed solution will occur. In general, vector $\pmb{v}$ allows us to model various attitudes of decision makers towards risk. Moreover, nonincreasing weights are consistent with the concept of robustness. 
Given a solution $X$, let $\sigma$ be such that $F(X,\pmb{c}_{\sigma(1)})\geq \dots \geq F(X,\pmb{c}_{\sigma(K)})$. Then, the value of $\omega_j$ can be seen as a distorted, rank-dependent probability of scenario $\pmb{c}_{\sigma(j)}$, and ${\rm WOWA}(X)$ is the expected solution cost with respect to the distorted probabilities. Notice that $\omega_j$ depends not only on the scenario probability but also on the solution~$X$.

Let us consider again the sample \textsc{Shortest Path} problem shown in Figure~\ref{figex1}. Suppose that $\pmb{v}=(0.5, 0.3, 0.2, 0)$. The computation of the weights $\omega_1,\dots,\omega_4$ for paths $X_1=\{e_1,e_4\}$ and $X_2=\{e_1,e_3,e_5\}$ is shown in Figure~\ref{figex2}. For $X_1$ we get $F(X_1,\pmb{c}_1)\geq F(X_1,\pmb{c}_4)\geq F(X_1,\pmb{c}_2)\geq F(X_1,\pmb{c}_3)$ and $\pmb{\omega}=(0.8,0.08,0.12,0)$. Hence ${\rm WOWA}(X_1)=0.8\cdot10+0.08\cdot2+0.12\cdot 1 +0 \cdot 1=8.28$. Observe that for $X_1$, the worst scenario $\pmb{c}_1$ has the weight equal to 0.8, which is greater than $p_1=0.5$ and the best scenario $\pmb{c}_3$ has the weight equal to 0, which is less than $p_3=0.2$. This example illustrates  how the vector $\pmb{v}$ distorts the scenario probabilities for solution $X_1$, by paying more attention to worse scenarios. In a similar way we compute the weights for path $X_2$, obtaining $\pmb{\omega}=(0.2, 0.36,0.44,0)$ and ${\rm WOWA(X_2)}=0.2\cdot 8+0.36\cdot 7+0.44\cdot 5+0\cdot 5=6.32$. Observe that ${\rm WOWA}(X_2)<{\rm WOWA}(X_1)$, so a risk averse decision maker would prefer solution $X_2$ over $X_1$, contrary to the case when the expected value is used as the criterion of choosing a solution.

\begin{figure}[bht]
\centering
\includegraphics[scale=.5]{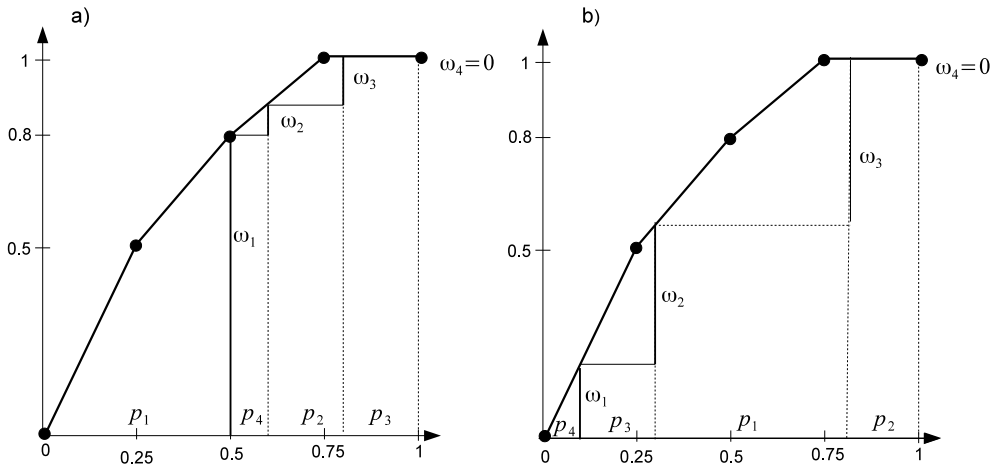}
\caption{The weights $\omega_1,\dots,\omega_4$ for paths a) $X_1=\{e_1,e_4\}$ and b) $X_2=\{e_1,e_3,e_5\}$.}
\label{figex2}       
\end{figure}

\section{Complexity of the problem}
\label{secComp}

In this section we discuss the complexity of \textsc{Min-Wowa}~$\mathcal{P}$.  Notice that \textsc{Min-Wowa}~$\mathcal{P}$  becomes the \textsc{Min-Owa}~$\mathcal{P}$ problem, discussed in~\cite{KZ15}, when $\pmb{p}=(1/K,\dots,1/K)$ and $\pmb{v}$ is an arbitrary weight vector.
 If additionally $\pmb{v}=(1,0,\dots,0)$, then \textsc{Min-Wowa}~$\mathcal{P}$ is the \textsc{Min-Max}~$\mathcal{P}$ problem, widely discussed in the literature devoted to the robust discrete optimization. Hence all negative complexity and approximation results known for \textsc{Min-Owa}~$\mathcal{P}$ and \textsc{Min-Max}~$\mathcal{P}$ remain valid for \textsc{Min-Wowa}~$\mathcal{P}$. Let us recall that
\textsc{Min-Max}~$\mathcal{P}$ is usually NP-hard even when $K=2$. In particular, this is the case for all basic network problems such as \textsc{Shortest Path}, \textsc{Minimum Assignment}, \textsc{Minimum Spanning Tree}, or \textsc{Minimum Cut} (see, e.g.,~\cite{ABV08, AV01, KY97}). Furthermore, when $K$ is  part of the input, then for
 all the aforementioned problems, \textsc{Min-Max}~$\mathcal{P}$ is strongly NP-hard and  also hard to approximate within any constant factor~\cite{KZ09, KZ13}. The problem complexity becomes worse when the maximum criterion is replaced with the more general OWA one. It has been shown in~\cite{KZ15}, that all the basic network problems are then not at all approximable.  This negative result holds when the vector $\pmb{v}$ is arbitrary. However, for nonincreasing weights in $\pmb{v}$ the following positive result is known:
\begin{thm}[\cite{KZ15}]
\label{thmowa}
	When $v_1\geq v_2 \geq \dots \geq v_K$ and $\mathcal{P}$ is polynomially solvable, then \textsc{Min-Owa}~$\mathcal{P}$ is approximable within $v_1K$.
\end{thm}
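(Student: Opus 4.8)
The plan is to reduce \textsc{Min-Owa}~$\mathcal{P}$ to the ordinary expected-cost (average) version of $\mathcal{P}$, solve that version exactly, and then charge the solution it returns against the true optimum by comparing the OWA value of a solution with its average cost. For a solution $X\in\Phi$ I would write ${\rm OWA}(X)={\rm owa}_{(\pmb{v})}(F(X,\pmb{c}_1),\dots,F(X,\pmb{c}_K))$ and introduce the average cost $\bar{F}(X)=\frac{1}{K}\sum_{j\in [K]} F(X,\pmb{c}_j)$. The key structural remark is that $\bar{F}(X)=\sum_{e_i\in X}\hat{c}_i$ with aggregated deterministic costs $\hat{c}_i=\frac{1}{K}\sum_{j\in [K]} c_{ji}$; hence minimizing $\bar{F}$ over $\Phi$ is literally an instance of the deterministic problem $\mathcal{P}$ and, by hypothesis, can be solved in polynomial time. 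Let $\widehat{X}$ be such a minimizer; this is the solution the algorithm outputs.

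The engine of the proof is the two-sided estimate
\begin{equation*}
\bar{F}(X)\;\le\;{\rm OWA}(X)\;\le\;v_1K\,\bar{F}(X),
\end{equation*}
valid for every $X\in\Phi$. The right-hand inequality is immediate: since $v_1\ge\dots\ge v_K$ we have $v_j\le v_1$ for all $j$, and the costs $F(X,\pmb{c}_j)$ are nonnegative, so ${\rm OWA}(X)=\sum_{j\in [K]} v_j F(X,\pmb{c}_{\sigma(j)})\le v_1\sum_{j\in [K]} F(X,\pmb{c}_{\sigma(j)})=v_1K\,\bar{F}(X)$. The left-hand inequality is exactly the WOWA bound $\sum_{j\in [K]} p_j a_j\le{\rm wowa}_{(\pmb{v},\pmb{p})}(\pmb{a})$ already recorded above, specialized to the uniform vector $\pmb{p}=(1/K,\dots,1/K)$, for which WOWA collapses to OWA and $\sum_{j\in [K]} p_j a_j$ becomes $\bar{F}(X)$; if desired it can be re-derived directly by Abel summation, using the majorization $\sum_{i\le k} v_i\ge k/K$ against the decreasing sorted cost sequence.

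With this sandwich, the approximation guarantee follows by chaining, comparing $\widehat{X}$ with an optimal solution $X^\star$ of \textsc{Min-Owa}~$\mathcal{P}$:
\begin{equation*}
{\rm OWA}(\widehat{X})\;\le\;v_1K\,\bar{F}(\widehat{X})\;\le\;v_1K\,\bar{F}(X^\star)\;\le\;v_1K\,{\rm OWA}(X^\star),
\end{equation*}
where the middle step uses optimality of $\widehat{X}$ for $\bar{F}$ and the final step applies the left half of the sandwich to $X^\star$. Since ${\rm OWA}(X^\star)$ is the optimum of \textsc{Min-Owa}~$\mathcal{P}$, this shows $\widehat{X}$ is a $v_1K$-approximate solution, and it is computed in polynomial time, which is the claim.

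I expect the only genuinely delicate point to be the left inequality, namely that OWA with nonincreasing weights never drops below the plain average. This is precisely where the hypothesis $v_1\ge\dots\ge v_K$ is indispensable: for an arbitrary weight vector the average is no longer a lower bound, which is consistent with the non-approximability of \textsc{Min-Owa}~$\mathcal{P}$ in the general case. Everything else—the cost aggregation that turns $\bar{F}$ into a deterministic $\mathcal{P}$, the trivial upper bound, and the final chaining—is routine bookkeeping.
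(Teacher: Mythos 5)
Your proof is correct, but it takes a genuinely different route from the paper's. The paper does not reprove the cited theorem in isolation; its own argument is the WOWA generalization in Section~\ref{secAppr} (Theorem~\ref{thm1} and Corollary~\ref{cor1}), which contains this statement as the uniform-probability special case. There, the deterministic algorithm is run on \emph{rank-aggregated} element costs $\hat{c}_i={\rm owa}_{(\pmb{v})}(c_{1i},\dots,c_{Ki})$, whereas you run it on the plain means $\hat{c}_i=\frac{1}{K}\sum_{j\in [K]}c_{ji}$; these are different algorithms and can output different solutions $\hat{X}$. The difference propagates into the analysis: because OWA is not additive across elements, the paper needs Lemma~\ref{lem1} (any permutation of the arguments lower-bounds the rank-ordered aggregation) to obtain the subadditivity-type inequality (\ref{f1}), ${\rm WOWA}(\hat{X})\leq \sum_{e_i\in \hat{X}}\hat{c}_i$, and then applies Lemma~\ref{lem2} element by element. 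Your mean aggregation is linear, so $\bar{F}(X)=\sum_{e_i\in X}\hat{c}_i$ holds with equality, the reduction to $\mathcal{P}$ is exact, and nothing like Lemma~\ref{lem1} is required: the entire proof collapses to the solution-level sandwich $\bar{F}(X)\leq {\rm OWA}(X)\leq v_1K\,\bar{F}(X)$, whose two halves are precisely the paper's inequality (\ref{f3}) (the mean lower bound, the only place where $v_1\geq\dots\geq v_K$ enters, as you correctly isolate) and Lemma~\ref{lem2} applied at the level of whole solutions. Your route is shorter, and it generalizes verbatim: replacing the plain mean by the weighted mean $\sum_{j\in [K]}p_jc_{ji}$ yields the $v_1K$ bound for \textsc{Min-Wowa}~$\mathcal{P}$, and replacing exact optimization of the aggregated instance by a $\gamma$-approximation yields $\gamma v_1K$, matching the last theorem of Section~\ref{secAppr}. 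What the paper's heavier aggregation buys is not a better worst-case ratio (both algorithms are tight at $v_1K$) but an aggregated instance whose costs still encode the decision maker's risk attitude, which is the variant the paper actually implements and tests.
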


In the next section we will generalize Theorem~\ref{thmowa} to \textsc{Min-Wowa}~$\mathcal{P}$. It has been 
shown in~\cite{KZ15}, that \textsc{Min-Owa}~$\mathcal{P}$ can be solved in pseudopolynomial time and even admits a
 fully polynomial-time approximation scheme (FPTAS), 
 when $K$ is constant and some additional assumptions for $\mathcal{P}$ are satisfied. We now show that the reasoning  can be easily generalized to \textsc{Min-Wowa}~$\mathcal{P}$. Observe that ${\rm wowa}_{(\pmb{v},\pmb{p})}(\pmb{a})$ is nondecreasing with respect to each $a_j$ in $\pmb{a}$. This fact immediately implies, that there exists an optimal solution $X$ to \textsc{Min-Wowa}~$\mathcal{P}$, which is efficient (Pareto optimal), i.e. for which there is no solution $Y$ such that $F(Y,\pmb{c}_j)\leq F(X,\pmb{c}_j)$ for each $j\in [K]$ with at least one strict inequality. Notice also that each optimal solution to \textsc{Min-Wowa}~$\mathcal{P}$ must be efficient when all components of $\pmb{p}$ and $\pmb{v}$ are positive. Thus it is sufficient to enumerate efficient solutions  and pick up a one, say~$X^*$, with
 the minimum~ ${\rm WOWA(X^*)}$.
 For some problems, for example when $\mathcal{P}$ is
 the   \textsc{Shortest Path} or \textsc{Minimum Spanning Tree} problem, 
 such enumeration of efficient solutions 
 can be done in pseudopolynomial time, provided that $K$ is constant,
 by using techniques given in~\cite{ABV10}. Hence, for constant $K$,
  \textsc{Min-Wowa}~\textsc{Shortest Path} 
  and \textsc{Min-Wowa}~\textsc{Minimum Spanning Tree} 
  can be solved in pseudopolynomial time.

In order to construct an FPTAS, we need a definition of an \emph{exact problem} associated with $\mathcal{P}$ and scenario set $\Gamma$ (see~\cite{MS08}). Given a vector $(b_1,\dots, b_K)$, we ask if there  is a solution $X\in \Phi$ such
  that $F(X,\pmb{c}_j)=b_j$ for all $j\in [K]$. Let us fix $\epsilon>0$ and let $P_{\epsilon}(\Phi)$ be the set of solutions such that for all $X\in \Phi$, there is $Y\in P_{\epsilon}(\Phi)$ such that $F(Y,\pmb{c}_j)\leq (1+\epsilon)\,F(X,\pmb{c}_j)$ for all $j\in [K]$.
  Basing on the results obtained in~\cite{PY00}, it was proven in~\cite{MS08} that
if the exact problem associated with~$\mathcal{P}$ can be solved in pseudopolynomial time, then for any $\epsilon>0$, the set $P_{\epsilon}(\Phi)$ can be determined in time polynomial in the input size and $1/\epsilon$. This implies the following result (the reasoning is the same as in~\cite{KZ15}):
\begin{thm}
\label{thmfptas}
If the exact problem associated with~$\mathcal{P}$ can be solved in pseudopolynomial time, then $\textsc{Min-Wowa}~\mathcal{P}$ admits an FPTAS.
\end{thm}
\begin{proof}
	Let us fix $\epsilon>0$ and let $Y$ be a solution with minimum value of $\mathrm{WOWA}(Y)$ among all the solutions in $P_{\epsilon}(\Phi)$. From the  results obtained in~\cite{MS08, PY00}, it follows that we can find~$Y$ in time polynomial in the input size and $1/\epsilon$. Assume that $X^*$ is an optimal solution to \textsc{Min-Wowa}~$\mathcal{P}$. Define vector $\pmb{b}^*=((1+\epsilon)F(X^*,\pmb{c}_1),\dots,(1+\epsilon)F(X^*,\pmb{c}_K))$. 
By the definition of~$P_{\epsilon}(\Phi)$, there exists a solution~$Y'\in P_{\epsilon}(\Phi)$
such that  $F(Y',\pmb{c}_j)\leq (1+\epsilon)F(X^*,\pmb{c}_j)$ for all $j\in [K]$. The choice of~$Y$
and  the monotonicity of  WOWA  implies ${\rm WOWA(Y)}\leq {\rm WOWA(Y')}\leq {\rm wowa}_{(\pmb{v},\pmb{p})}(\pmb{b}^*)=(1+\epsilon){\rm WOWA}(X^*)$.  We have thus obtained an FPTAS for $\textsc{Min-Wowa}~\mathcal{P}$. 
\end{proof}

It turns out that the exact problem associated with~$\mathcal{P}$ can be solved in pseudopolynomial time for some particular problems~$\mathcal{P}$, provided that 
the number of scenarios~$K$ is constant.
This is the case for \textsc{Shortest Path},  \textsc{Minimum Spanning Tree} and some other problems described, for example, in~\cite{ABV10}. However,
it is worth pointing out that
  the running time of the obtained FPTAS's  is exponential in $K$, so their practical applicability is limited to very small values of $K$. In the next section we will construct an approximation algorithm, which can be applied for larger values of $K$.

\section{Approximation algorithm}
\label{secAppr}

In this section we construct an approximation algorithm for \textsc{Min-Wowa}~$\mathcal{P}$ under the assumptions that $v_1\geq v_2\geq \dots \geq v_K$ and $\mathcal{P}$ is polynomially solvable. We  will also assume that $p_j>0$ for each $j\in [K]$. When $p_j=0$ for some $j\in [K]$, then we can remove scenario $\pmb{c}_j$ from $\Gamma$ without affecting the problem. We first prove some properties of the WOWA operator. 
Let $\pmb{a}=(a_1,\dots,a_K)$ be a vector of nonnegative numbers. Let $\pi$ be any sequence of $[K]$. Let us define
$$f_{\pi}(\pmb{a})= \sum_{j\in [K]} \omega_j a_{\pi(j)},$$ 
where $\omega_j=w^*(\sum_{i\leq j} p_{\pi(i)})-w^*(\sum_{i<j} p_{\pi(i)})$ and $w^*$ is the piecewise linear  function induced by the vector of weights $\pmb{v}$ (as in the definition of the WOWA operator). Observe that $f_{\pi}(\pmb{a})={\rm wowa}_{(\pmb{v},\pmb{p})}(\pmb{a})$ when  the sequence $\pi$ is such that $a_{\pi(1)}\geq \dots \geq a_{\pi(K)}$. The following lemma expresses the intuitive fact that $f_{\pi}(\pmb{a})$ is a lower bound on ${\rm wowa}_{(\pmb{v},\pmb{p})}(\pmb{a})$.

\begin{lem}
\label{lem1}
Given any vector $\pmb{a}=(a_1,\dots,a_K)$ and any sequence $\pi$ of $[K]$. Then
${\rm wowa}_{(\pmb{v},\pmb{p})}(\pmb{a})\geq f_{\pi}(\pmb{a})$.
\end{lem}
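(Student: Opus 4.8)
The plan is to prove the slightly stronger statement that, among all sequences $\pi$ of $[K]$, the quantity $f_{\pi}(\pmb{a})$ is maximized by the sorting sequence $\sigma$ with $a_{\sigma(1)}\geq\dots\geq a_{\sigma(K)}$. Since $f_{\sigma}(\pmb{a})={\rm wowa}_{(\pmb{v},\pmb{p})}(\pmb{a})$ by the observation made just before the lemma, this immediately gives the desired inequality ${\rm wowa}_{(\pmb{v},\pmb{p})}(\pmb{a})\geq f_{\pi}(\pmb{a})$. I would establish the maximality of $\sigma$ by an exchange (bubble-sort) argument: I will show that whenever $\pi$ contains an \emph{adjacent inversion}, i.e.\ two consecutive positions $j,j+1$ with $a_{\pi(j)}<a_{\pi(j+1)}$, swapping the values at these two positions does not decrease $f_{\pi}(\pmb{a})$. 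As any sequence can be turned into $\sigma$ by finitely many adjacent transpositions, each removing one inversion, iterating this swap yields $f_{\sigma}(\pmb{a})\geq f_{\pi}(\pmb{a})$.

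The only structural ingredient needed is the concavity of $w^*$, which holds precisely because of the standing assumption $v_1\geq\dots\geq v_K$. First I would record the consequence of concavity that the argument uses: for all $S\geq 0$ and $p,q\geq 0$ with $S+p+q\leq 1$,
$$w^*(S+p)+w^*(S+q)\geq w^*(S)+w^*(S+p+q).$$
This is the ``decreasing increments'' property of a concave function, obtained by comparing the increment of $w^*$ over a step of length $p$ taken from the base point $S$ against the same step taken from the later base point $S+q$; it is the single place where $v_1\geq\dots\geq v_K$ enters.

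Then I would carry out the single-swap computation. Fix an adjacent inversion at positions $j,j+1$ and set $S=\sum_{i<j}p_{\pi(i)}$, $p=p_{\pi(j)}$, $q=p_{\pi(j+1)}$, $a=a_{\pi(j)}$, $b=a_{\pi(j+1)}$ with $a<b$; let $\pi'$ be the sequence obtained by interchanging the values at positions $j$ and $j+1$. Only these two positions change, and both sequences share the cumulative sum $S$ just before position $j$ and $S+p+q$ just after position $j+1$, so every other term of $f$ cancels in the difference. Abbreviating $A=w^*(S)$, $B=w^*(S+p)$, $C=w^*(S+q)$, $D=w^*(S+p+q)$, the contribution of the two positions to $f_{\pi}$ is $(B-A)a+(D-B)b$ and to $f_{\pi'}$ is $(C-A)b+(D-C)a$; a short telescoping then gives
$$f_{\pi'}(\pmb{a})-f_{\pi}(\pmb{a})=(b-a)\bigl(B+C-A-D\bigr).$$
Here $b-a>0$ since $a<b$, and $B+C-A-D\geq 0$ by the concavity inequality above, so $f_{\pi'}(\pmb{a})\geq f_{\pi}(\pmb{a})$, completing the exchange step.

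I do not expect a serious obstacle; the one point requiring care is the bookkeeping in the single-swap computation, where the four values $A,B,C,D$ must be verified to collapse into the clean product $(b-a)(B+C-A-D)$. Ties in $\pmb{a}$ cause no difficulty: when $a=b$ the swap leaves $f$ unchanged, so the terminal value $f_{\sigma}(\pmb{a})$ is independent of how equal entries are ordered and indeed equals ${\rm wowa}_{(\pmb{v},\pmb{p})}(\pmb{a})$.
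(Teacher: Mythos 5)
Your proof is correct and follows essentially the same route as the paper's: an adjacent-transposition (bubble-sort) exchange argument in which the concavity of $w^*$ (equivalently, $v_1\geq\dots\geq v_K$) guarantees that undoing an inversion cannot decrease $f_{\pi}$. Your identity $f_{\pi'}(\pmb{a})-f_{\pi}(\pmb{a})=(b-a)(B+C-A-D)$ is exactly the paper's $\alpha(a_{\pi(i)}-a_{\pi(i+1)})$ with $\alpha=-(B+C-A-D)$, so the two computations coincide up to notation.
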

\begin{proof}
Assume w.l.o.g. that $a_1\geq a_2\geq \dots\geq a_K$.
Consider two neighbor elements $a_{\pi(i)}$ and $a_{\pi(i+1)}$	in $\pi$ such that $a_{\pi(i)}\leq a_{\pi(i+1)}$.  Let us interchange $a_{\pi(i)}$ and $a_{\pi(i+1)}$ in $\pi$ and denote the resulting sequence by $\pi'$. We will show that $f_{\pi'}(\pmb{a})\geq f_{\pi}(\pmb{a})$, where the equality holds when $a_{\pi(i)}=a_{\pi(i+1)}$. This will complete the proof since we can transform $\pi$ into  $\sigma=(1,\dots,K)$ by using a finite number of such element interchanges without decreasing the value of $f_{\pi}$ and $f_{\sigma}(\pmb{a})={\rm wowa}_{(\pmb{v},\pmb{p})}(\pmb{a})$. 
 It is easily seen that $f_{\pi'}(\pmb{a})-f_{\pi}(\pmb{a})=\omega'_i a_{\pi(i+1)} + \omega'_{i+1} a_{\pi(i)}-\omega_i a_{\pi(i)} - \omega_{i+1} a_{\pi(i+1)}=(\omega'_{i+1}-\omega_{i})a_{\pi(i)}-(\omega_{i+1}-\omega'_i)a_{\pi(i+1)}$. 
 Equality $\omega'_i+\omega'_{i+1}=\omega_i+\omega_{i+1}$ (see Figure~\ref{fig2}a) holds, and so $\omega'_{i+1}-\omega_{i}=\omega_{i+1}-\omega'_{i}=\alpha$. Hence $f_{\pi'}(\pmb{a})-f_{\pi}(\pmb{a})=\alpha(a_{\pi(i)}-a_{\pi(i+1)})$. Since $w^*$ is concave, we conclude that $\omega_{i+1}/p_{\pi(i+1)}\leq \omega'_i/p_{\pi(i+1)}$, which yields $\alpha\leq 0$ since $p_{\pi(i+1)}>0$. Hence $f_{\pi'}(\pmb{a})\geq f_{\pi}(\pmb{a})$.
\end{proof}
\begin{figure}[ht]
\centering
\includegraphics[scale=.7]{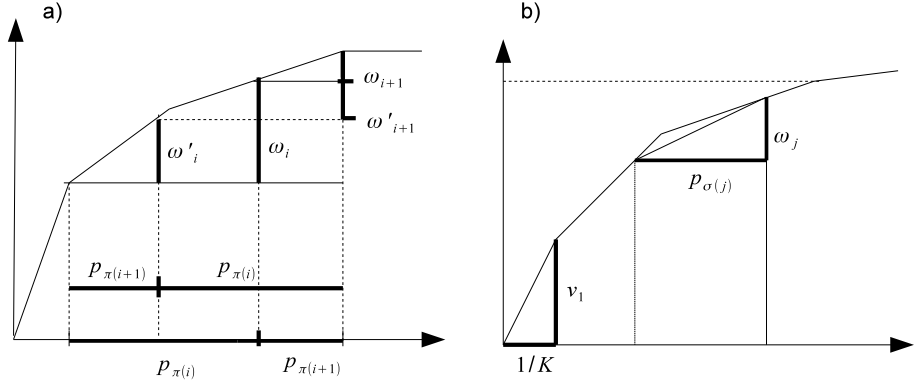}
\caption{Illustrations of the proofs of Lemma~\ref{lem1} and Lemma \ref{lem2}.}
\label{fig2}      
\end{figure}
\begin{lem}
\label{lem2}
	Given any vector $\pmb{a}=(a_1,\dots,a_K)$. Then ${\rm wowa}_{(\pmb{v},\pmb{p})}(\pmb{a})\leq v_1K\sum_{j\in [K]} p_ja_j$.
\end{lem}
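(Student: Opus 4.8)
The plan is to compare each WOWA weight $\omega_j$ against the corresponding probability $p_{\sigma(j)}$ and extract the uniform bound $\omega_j \le v_1 K\, p_{\sigma(j)}$; multiplying by $a_{\sigma(j)}\ge 0$ and summing then yields the claim directly. Note that the right-hand side $v_1 K\sum_{j\in[K]}p_j a_j$ is just $v_1 K$ times the expected value, so the whole argument rests on a single property of the generating function $w^*$: among its linear pieces, the steepest one is the first.

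First I would record the slopes of $w^*$. On the subinterval $\left[\frac{j-1}{K},\frac{j}{K}\right]$ the function is linear with slope $\bigl(w^*(\frac{j}{K})-w^*(\frac{j-1}{K})\bigr)/\frac1K = K v_j$, since $w^*(\frac{j}{K})-w^*(\frac{j-1}{K})=v_j$. Because $v_1\ge v_2\ge\dots\ge v_K$, these slopes are nonincreasing in $j$, so $w^*$ is concave (as already observed in Section~\ref{secForm}) and its largest slope, attained on the first piece $\left[0,\frac1K\right]$, equals $v_1 K$.

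Next I would use concavity to bound the increments of $w^*$. For a concave function the secant slope over any interval $[x,y]\subseteq[0,1]$ is at most the maximal (leftmost) slope, so $w^*(y)-w^*(x)\le v_1 K\,(y-x)$. Writing $s_j=\sum_{i\le j}p_{\sigma(i)}$ and applying this with $x=s_{j-1}$ and $y=s_j$ gives $\omega_j = w^*(s_j)-w^*(s_{j-1})\le v_1 K\,(s_j-s_{j-1}) = v_1 K\, p_{\sigma(j)}$ for every $j\in[K]$.

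Finally, since all components of $\pmb{a}$ are nonnegative, I would multiply the bound on $\omega_j$ by $a_{\sigma(j)}$ and sum over $j$:
$${\rm wowa}_{(\pmb{v},\pmb{p})}(\pmb{a})=\sum_{j\in[K]}\omega_j a_{\sigma(j)}\le v_1 K\sum_{j\in[K]}p_{\sigma(j)}a_{\sigma(j)}=v_1 K\sum_{j\in[K]}p_j a_j,$$
where the last equality only reindexes the sum through the permutation $\sigma$. I do not expect a genuine obstacle here; the one point that must be handled cleanly is the concavity estimate on the $w^*$-increments (equivalently, the fact that the first piece is the steepest, which is exactly what Figure~\ref{fig2}b illustrates), and the nonnegativity of $\pmb{a}$ is precisely what lets the termwise inequality survive summation.
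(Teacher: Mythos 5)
Your proof is correct and takes essentially the same route as the paper: both arguments reduce to the pointwise bound $\omega_j \le v_1 K\, p_{\sigma(j)}$, obtained from the concavity of the piecewise linear function $w^*$ whose steepest (first) piece has slope $v_1 K$, and then multiply by $a_{\sigma(j)}\ge 0$ and sum. The only difference is presentational: you spell out the secant-slope estimate on the increments of $w^*$, which the paper's one-line proof delegates to the illustration in Figure~\ref{fig2}(b).
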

\begin{proof}
Since $w^*$ is concave and piecewise linear, it follows that
$\frac{\omega_j}{p_{\sigma(j)}}\leq \frac{v_1}{1/K}=v_1K$ for each $j\in [K]$ (see Figure~\ref{fig2}(b)). In consequence, ${\rm wowa}_{(\pmb{v},\pmb{p})}(\pmb{a})=\sum_{j\in [K]} \omega_j a_{\sigma(j)}\leq \sum_{j\in [K]} v_1Kp_{\sigma(j)}a_{\sigma(j)}=v_1K\sum_{j\in [K]} p_j a_j$.
\end{proof}

Let $\hat{c}_i={\rm wowa}_{(\pmb{v},\pmb{p})}(c_{1i},\dots,c_{Ki})$ be the aggregated cost of element $e_i\in E$ over all scenarios. Let $\hat{X}$ be an optimal solution for the costs $\hat{c}_i$, $i\in [n]$. We begin with a general result:
\begin{thm}
\label{thm1}
 Given any $X$. Then ${\rm WOWA}(\hat{X})\leq Kv_1\cdot {\rm WOWA}(X)$.
\end{thm}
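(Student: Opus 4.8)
The plan is to sandwich ${\rm WOWA}(\hat{X})$ between two quantities expressed through the aggregated element costs $\hat{c}_i$, and then to cash in the optimality of $\hat{X}$ for the deterministic instance. Concretely, I would establish the chain
$${\rm WOWA}(\hat{X}) \leq \sum_{e_i \in \hat{X}} \hat{c}_i \leq \sum_{e_i \in X} \hat{c}_i \leq Kv_1\,{\rm WOWA}(X),$$
in which the middle inequality is nothing but the optimality of $\hat{X}$ with respect to the costs $\hat{c}_i$ (here I use that $\mathcal{P}$ is solved exactly for these costs), while the two outer inequalities carry the real content of the argument. The first says that WOWA of a solution never exceeds the sum of the WOWA-aggregated costs of its elements, and the last relates the aggregated costs back to ${\rm WOWA}(X)$.

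The first inequality is the main obstacle: it amounts to showing that the WOWA operator is \emph{subadditive} along sums of element-cost vectors. The key is to read Lemma~\ref{lem1} in its sharp form. For a fixed sequence $\pi$ the map $\pmb{a}\mapsto f_{\pi}(\pmb{a})=\sum_{j\in[K]}\omega_j a_{\pi(j)}$ is \emph{linear}, since the weights $\omega_j$ depend only on $\pi$, $\pmb{p}$ and $w^*$ and not on $\pmb{a}$; in particular each $f_{\pi}$ is additive. Moreover, Lemma~\ref{lem1} together with the definition of WOWA gives ${\rm wowa}_{(\pmb{v},\pmb{p})}(\pmb{a})=\max_{\pi} f_{\pi}(\pmb{a})$. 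Writing the scenario-cost vector of $\hat{X}$ as the sum $\sum_{e_i\in\hat{X}}(c_{1i},\dots,c_{Ki})$ of the scenario-cost vectors of its elements, additivity of each $f_{\pi}$ and the elementary fact that a maximum of a sum is at most the sum of the maxima yield
$${\rm WOWA}(\hat{X})=\max_{\pi}\sum_{e_i\in\hat{X}} f_{\pi}(c_{1i},\dots,c_{Ki})\leq \sum_{e_i\in\hat{X}}\max_{\pi} f_{\pi}(c_{1i},\dots,c_{Ki})=\sum_{e_i\in\hat{X}}\hat{c}_i.$$

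For the last inequality I would apply Lemma~\ref{lem2} elementwise: $\hat{c}_i={\rm wowa}_{(\pmb{v},\pmb{p})}(c_{1i},\dots,c_{Ki})\leq Kv_1\sum_{j\in[K]}p_j c_{ji}$. Summing over $e_i\in X$ and interchanging the order of summation gives $\sum_{e_i\in X}\hat{c}_i\leq Kv_1\sum_{j\in[K]}p_j F(X,\pmb{c}_j)$. It then remains to bound the expected cost $\sum_{j\in[K]}p_j F(X,\pmb{c}_j)$ from above by ${\rm WOWA}(X)$, which is exactly the lower-bound property $\sum_{j\in[K]}p_j a_j\leq {\rm wowa}_{(\pmb{v},\pmb{p})}(\pmb{a})$ recorded for nonincreasing $\pmb{v}$ in Section~\ref{secForm}, applied to $\pmb{a}=(F(X,\pmb{c}_1),\dots,F(X,\pmb{c}_K))$. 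Chaining the three inequalities completes the proof. I expect the only genuinely delicate point to be the subadditivity step, whose whole force rests on recognizing that Lemma~\ref{lem1} expresses WOWA as a pointwise maximum of the linear functionals $f_{\pi}$; the remaining steps are routine applications of Lemma~\ref{lem2} and the expectation lower bound.
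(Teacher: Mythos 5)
Your proposal is correct and takes essentially the same route as the paper: the identical chain ${\rm WOWA}(\hat{X})\leq\sum_{e_i\in\hat{X}}\hat{c}_i\leq\sum_{e_i\in X}\hat{c}_i\leq Kv_1\,{\rm WOWA}(X)$, with the outer inequalities coming from Lemma~\ref{lem1} and from Lemma~\ref{lem2} combined with the expectation lower bound for nonincreasing $\pmb{v}$. Your ``WOWA is a pointwise maximum of the linear functionals $f_{\pi}$, hence subadditive'' framing of the first step is just a repackaging of the paper's argument, which fixes the permutation $\sigma$ sorting the costs of $\hat{X}$ and applies Lemma~\ref{lem1} elementwise to each vector $(c_{1i},\dots,c_{Ki})$.
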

\begin{proof}
Let $\sigma$ be a sequence of $[K]$ such that $F(\hat{X},\pmb{c}_{\sigma(1)})\geq\dots\geq F(\hat{X},\pmb{c}_{\sigma(K)})$ and $\omega_j=w^*(\sum_{i\leq j} p_{\sigma(i)})-w^*(\sum_{i<j} p_{\sigma(i)})$. The definition of the WOWA operator and Lemma~\ref{lem1} imply the following inequality:
\begin{equation}
\label{f1} 
\mathrm{WOWA}(\hat{X})=\sum_{j\in [K]} \omega_j \sum_{e_i\in \hat{X}} c_{\sigma(j)i}=
\sum_{e_i\in \hat{X}} \sum_{j\in [K]} \omega_j c_{\sigma(j)i}\leq \sum_{e_i\in \hat{X}} \hat{c}_i.
\end{equation} 	
Using Lemma~\ref{lem2}, we get
$
	\hat{c}_i \leq v_1 K \sum_{j\in [K]} p_jc_{ji}.
$
Hence, from the definition of $\hat{X}$,  we obtain
\begin{equation}
\label{f2}
\sum_{e_i\in \hat{X}} \hat{c}_i \leq \sum_{e_i\in X}\hat{c}_i\leq Kv_1\sum_{e_i\in X} \sum_{j\in [K]}  p_j c_{ji}.
\end{equation} 
Since $v_1\geq \dots \geq v_K$ it follows that
\begin{equation}
\label{f3}
\mathrm{WOWA}(X)\geq \sum_{j\in [K]} p_j F(X,\pmb{c}_j)=\sum_{j\in [K]} p_j\sum_{e_i \in X} c_{ji}=\sum_{e_i\in X}\sum_{j\in [K]} p_jc_{ji}.
\end{equation}
Combining~(\ref{f1}), (\ref{f2}) and~(\ref{f3}) completes the proof.
\end{proof}
Theorem~\ref{thm1} leads to the following corollary:
\begin{cor}
\label{cor1}
	If $v_1\geq\dots\geq v_K$ and $\mathcal{P}$ is polynomially solvable, then \textsc{WOWA}~$\mathcal{P}$ is approximable within $v_1K$.
\end{cor}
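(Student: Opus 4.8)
The plan is to read the approximation guarantee straight off Theorem~\ref{thm1} by exhibiting the natural polynomial-time algorithm that the definition of $\hat{X}$ already suggests. For each element $e_i\in E$ I would first compute its aggregated cost $\hat{c}_i={\rm wowa}_{(\pmb{v},\pmb{p})}(c_{1i},\dots,c_{Ki})$, and then solve the underlying deterministic problem $\mathcal{P}$ with these scalar costs, returning an optimal solution $\hat{X}$. This is exactly the solution analyzed in Theorem~\ref{thm1}, so the corollary becomes a matter of checking that this procedure is polynomial and then substituting the optimum into the inequality of that theorem.

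First I would verify that both stages run in polynomial time. Evaluating a single $\hat{c}_i$ amounts to sorting the $K$ cost values, forming the cumulative probabilities $\sum_{i\le j} p_{\sigma(i)}$, evaluating the piecewise linear concave function $w^*$ at these breakpoints to obtain the weights $\omega_j$, and taking the weighted sum; this costs $O(K\log K)$ per element and hence $O(nK\log K)$ overall. The second stage is polynomial by the standing assumption that $\mathcal{P}$ is polynomially solvable. Thus the whole procedure is polynomial. The approximation bound is then immediate: letting $X^*$ be an optimal solution to \textsc{Min-Wowa}~$\mathcal{P}$, so that ${\rm WOWA}(X^*)\le {\rm WOWA}(X)$ for every $X\in\Phi$, I would instantiate Theorem~\ref{thm1} with $X=X^*$ to get ${\rm WOWA}(\hat{X})\le Kv_1\cdot {\rm WOWA}(X^*)$, which is precisely the statement that $\hat{X}$ is a $v_1K$-approximate solution.

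Since Theorem~\ref{thm1} (through Lemma~\ref{lem1}, Lemma~\ref{lem2}, and inequality~(\ref{f3})) already carries out all the analytical work, there is no real obstacle here; the corollary is essentially a repackaging of that theorem as an algorithmic statement. The only points deserving a moment of care are confirming that the aggregation step is genuinely polynomial and noting where the hypotheses are consumed: the ordering $v_1\ge\dots\ge v_K$ is used solely to invoke Theorem~\ref{thm1}, and the assumption $p_j>0$ for all $j\in[K]$ is what justified the concavity argument in Lemma~\ref{lem1}. Both checks are routine, so I expect the proof to be short.
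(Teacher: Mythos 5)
Your proposal is correct and matches the paper's (implicit) proof: the paper presents Corollary~\ref{cor1} as an immediate consequence of Theorem~\ref{thm1}, obtained exactly as you describe by computing the aggregated costs $\hat{c}_i$, solving $\mathcal{P}$ for them in polynomial time, and instantiating the theorem with an optimal solution $X^*$. Your added verification that evaluating the $\hat{c}_i$ takes $O(nK\log K)$ time is a routine detail the paper leaves unstated, not a departure in approach.
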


The bound obtained in Corollary~\ref{cor1} is tight and the worst case instance for the approximation algorithm is the same as the one shown in~\cite{KZ15}. Observe that the approximation ratio depends on the weight distribution in $\pmb{v}$. The more uniform is the weight distribution  the smaller is the approximation ratio. We get the largest approximation ratio equal to $K$, when WOWA is the weighted maximum. On the other hand, when $v_1=1/K$, i.e. when WOWA is the expected value, then we get an exact polynomial time algorithm for the problem.

In many cases the deterministic problem $\mathcal{P}$ is NP-hard, but is approximable within a factor of~$\gamma$. In this case the following result can be established.
\begin{thm}
	If $v_1\geq\dots\geq v_K$  and  $\mathcal{P}$ is approximable within $\gamma$, then \textsc{Min-Wowa}~$\mathcal{P}$ is approximable within $\gamma v_1K$.
\end{thm}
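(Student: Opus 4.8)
The plan is to reuse almost verbatim the argument of Theorem~\ref{thm1}, replacing the \emph{exact} solution of the deterministic surrogate problem by a $\gamma$-approximate one. The approximation algorithm first computes the aggregated costs $\hat{c}_i={\rm wowa}_{(\pmb{v},\pmb{p})}(c_{1i},\dots,c_{Ki})$ for every $e_i\in E$, and then applies the assumed $\gamma$-approximation algorithm for $\mathcal{P}$ to these costs, obtaining a feasible solution $\tilde{X}\in\Phi$ with $\sum_{e_i\in\tilde{X}}\hat{c}_i\leq\gamma\min_{Y\in\Phi}\sum_{e_i\in Y}\hat{c}_i$. The whole procedure runs in polynomial time, since computing each $\hat{c}_i$ is cheap and the approximation oracle for $\mathcal{P}$ is polynomial by assumption.

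Next I would bound ${\rm WOWA}(\tilde{X})$ from above. Inequality~(\ref{f1}) in the proof of Theorem~\ref{thm1} is derived from Lemma~\ref{lem1} and holds for \emph{any} feasible solution, not only for the one optimal with respect to $\hat{c}_i$; applying it to $\tilde{X}$ gives ${\rm WOWA}(\tilde{X})\leq\sum_{e_i\in\tilde{X}}\hat{c}_i$. Let $X^*$ be an optimal solution of \textsc{Min-Wowa}~$\mathcal{P}$. The $\gamma$-approximation guarantee, combined with the optimality of the surrogate optimum over $\Phi$, yields $\sum_{e_i\in\tilde{X}}\hat{c}_i\leq\gamma\sum_{e_i\in X^*}\hat{c}_i$; this is precisely the place where the extra factor $\gamma$ enters, replacing the plain optimality step used in the first inequality of~(\ref{f2}).

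It then remains to transfer the right-hand side back to ${\rm WOWA}(X^*)$, exactly as in Theorem~\ref{thm1}. By Lemma~\ref{lem2} we have $\hat{c}_i\leq v_1K\sum_{j\in[K]}p_jc_{ji}$, hence $\sum_{e_i\in X^*}\hat{c}_i\leq v_1K\sum_{e_i\in X^*}\sum_{j\in[K]}p_jc_{ji}$, and since $v_1\geq\dots\geq v_K$, inequality~(\ref{f3}) gives $\sum_{e_i\in X^*}\sum_{j\in[K]}p_jc_{ji}\leq{\rm WOWA}(X^*)$. Chaining these three bounds produces ${\rm WOWA}(\tilde{X})\leq\gamma v_1K\,{\rm WOWA}(X^*)$, which is the claimed ratio.

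Because every ingredient---Lemma~\ref{lem1}, Lemma~\ref{lem2}, and the ordering inequality~(\ref{f3})---already holds for arbitrary feasible solutions, I do not expect a genuine obstacle. The only conceptual point is to notice that exactness of $\hat{X}$ was used in Theorem~\ref{thm1} \emph{solely} through the first inequality of~(\ref{f2}), so degrading that single step to a $\gamma$-approximation multiplies the overall ratio by exactly $\gamma$. The one detail worth stating carefully is that~(\ref{f1}) must be invoked for the computed solution $\tilde{X}$ rather than for the surrogate optimum, which is legitimate precisely because Lemma~\ref{lem1} is solution-independent.
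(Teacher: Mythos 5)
Your proposal is correct and is essentially identical to the paper's own proof: the paper also runs the $\gamma$-approximation algorithm on the aggregated costs $\hat{c}_i$ and observes that only the first inequality of~(\ref{f2}) needs to be relaxed by the factor $\gamma$, with~(\ref{f1}) and~(\ref{f3}) carrying over unchanged. Your additional remark that~(\ref{f1}) applies to any feasible solution (hence to the approximate one) is exactly the point that makes the paper's terse "the rest of the proof is the same" legitimate.
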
	
\begin{proof}
	The proof is similar to the proof of Theorem~\ref{thm1}. In order to get a solution for costs $\hat{c}_i$ a $\gamma$-approximation algorithm is applied. It is then enough to modify inequality~(\ref{f2}), so that $\sum_{e_i\in \hat{X}} \hat{c}_i \leq \gamma \sum_{e_i\in X}\hat{c}_i\leq \gamma Kv_1\sum_{e_i\in X} \sum_{j\in [K]}  p_j c_{ji}$. The rest of the proof is the same.
\end{proof}

\section{Mixed integer programming formulation}
\label{secMIP}

In this section we design a mixed integer programming (MIP) formulation for \textsc{Min-Wowa}~$\mathcal{P}$. We will use the idea proposed in~\cite{OS09} (see also~\cite{CG15,FPP14,OS03} for alternative formulations for
the OWA operator). Let us associate a binary variable $x_i\in \{0,1\}$ with each element $e_i\in E$. Let $\chi(\Phi)\subseteq \{0,1\}^n$ be the set of all characteristic vectors of $\Phi$. Each  vector $\pmb{x}=(x_1,\dots,x_n)\in \chi(\Phi)$ defines a feasible solution $X$ such that $e_i\in X$ if and only if $x_i=1$. We will assume that $\chi(\Phi)$ can be described by a set of linear constraints involving variables $x_1,\dots,x_n$. 
From now on we will identify a feasible solution $X\in \Phi$ with the corresponding characteristic vector $\pmb{x}\in \chi(\Phi)$.
Let us fix a feasible solution $\pmb{x}\in \chi(\Phi)$. Let $\sigma$ be such that $F(\pmb{x},\pmb{c}_{\sigma(1)})\geq \dots \geq F(\pmb{x},\pmb{c}_{\sigma(K)})$. Define vector $\pmb{\alpha}=(\alpha_0, \alpha_1,\dots,\alpha_K)$ such that $\alpha_i=\sum_{j\leq i} p_{\sigma(j)}$, and $\alpha_0=0$.
Let us define $h_{\pmb{x}}(\theta)=F(\pmb{x},\pmb{c}_{\sigma(i)})$ for $\alpha_{i-1}<\theta\leq \alpha_i$, $i\in [K]$, $\theta\in (0,1]$. The following equality holds~\cite{OS09}:
\begin{equation}
\label{lemn1}
{\rm WOWA}(\pmb{x})=K\sum_{j\in [K]} v_j \int_{\frac{j-1}{K}}^{\frac{j}{K}}h_{\pmb{x}}(\theta) \mathrm{d\,}\theta.
\end{equation}

\begin{figure}[ht]
\centering
\includegraphics[scale=.7]{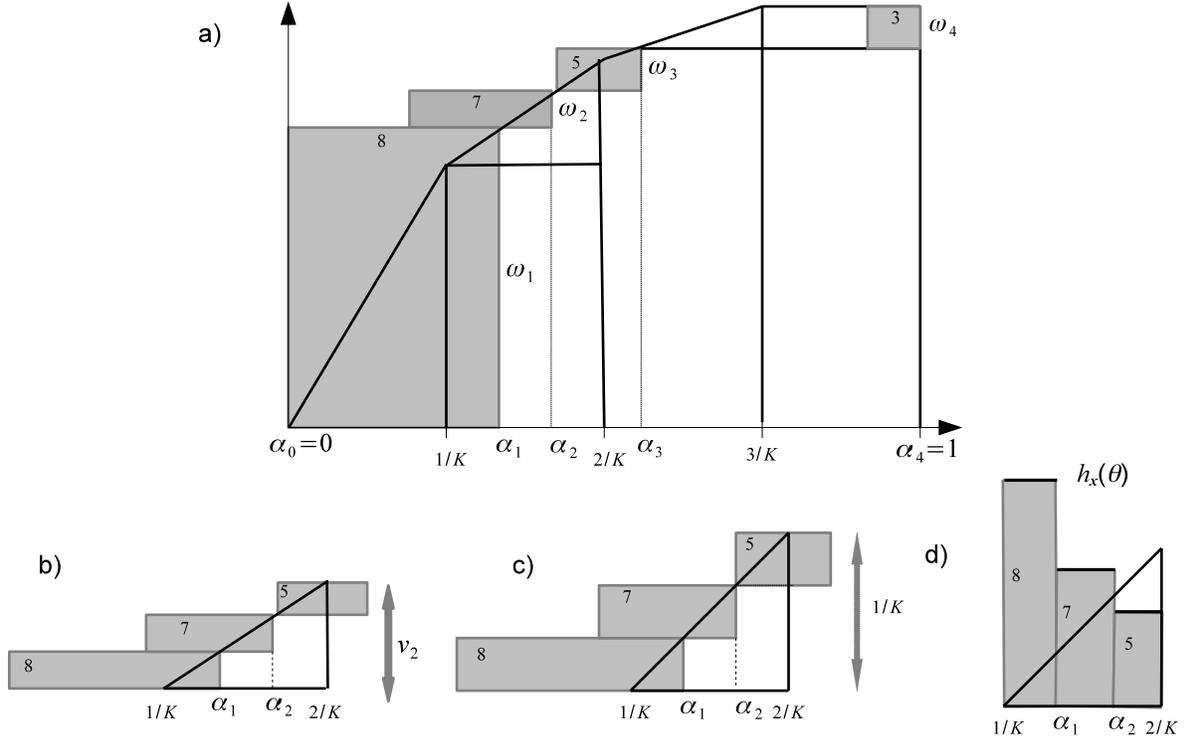}
\caption{Illustration of formula~(\ref{lemn1}) for vector $(8,5,7,3)$. In (c) the right-hand edge of the triangle in (b) is scaled by the factor $1/(Kv_2)$.}
\label{figex3}       
\end{figure}

Formula~(\ref{lemn1}) is illustrated in Figure~\ref{figex3}. Observe first that the value of ${\rm WOWA} (\pmb{x})$ is equal to the size of the area of the grey rectangles in Figure~\ref{figex3}a. In Figure~\ref{figex3}b the portion of this area that touches the triangle with base $[1/K, 2/K]$ is shown. After scaling the right-hand edge of this triangle by the factor $1/(v_2K)$ we obtain the area shown in Figure~\ref{figex3}c. After rotating the rectangles we obtain the area shown in Figure~\ref{figex3}d. Now it is easy to see that the size of this area is equal to $\int_{\frac{1}{K}}^{\frac{2}{K}}h_{\pmb{x}}(\theta)\mathrm{d\,}\theta$. Multiplying it by $v_2K$, we get the size of the area from Figure~\ref{figex3}b.

Equality~(\ref{lemn1}) has the following interpretation (see also~\cite{OS09}). The value of $K\int_{\frac{j-1}{K}}^{\frac{j}{K}}h_{\pmb{x}}(\theta)\mathrm{d\,}\theta$ is the average within the $j$th portion of $1/K$ largest solution costs. Then ${\rm WOWA} (\pmb{x})$ can be seen as the value of the OWA operator applied to these averages. When $p_j=1/K$, $j\in [K]$, then $K\int_{\frac{j-1}{K}}^{\frac{j}{K}}h_{\pmb{x}}(\theta)\mathrm{d\,}\theta=F(\pmb{x},\pmb{c}_{\sigma(j)})$ and ${\rm WOWA}(\pmb{x})$ becomes the OWA aggregation operator.

 Let us rewrite~(\ref{lemn1}) as follows:  
$$
{\rm WOWA}(\pmb{x})=K\sum_{j\in [K]} v_j \left(\int_{0}^{\frac{j}{K}}h_{\pmb{x}}(\theta)\mathrm{d\,}\theta-\int_{0}^{\frac{j-1}{K}}h_{\pmb{x}}(\theta)\mathrm{d\,}\theta\right)=
$$
$$=K \left(\sum_{j=1}^K v_j \int_{0}^{\frac{j}{K}}h_{\pmb{x}}(\theta)\mathrm{d\,}\theta - \sum_{j=0}^{K-1} v_{j+1}\int_{0}^{\frac{j}{K}}h_{\pmb{x}}(\theta)\mathrm{d\,}\theta\right).$$
Define $v_{K+1}=0$. Since  $\int_{0}^{0}h_{\pmb{x}}(\theta)\mathrm{d\,}\theta=0$, we have
$${\rm WOWA}(\pmb{x})=K \left(\sum_{j=1}^K v_j \int_{0}^{\frac{j}{K}}h_{\pmb{x}}(\theta)\mathrm{d\,}\theta - \sum_{j=1}^{K} v_{j+1}\int_{0}^{\frac{j}{K}}h_{\pmb{x}}(\theta)\mathrm{d\,}\theta\right),$$
and we get the following equality:
\begin{equation}
\label{formw}
	{\rm WOWA}(\pmb{x})=K\sum_{j\in [K]} (v_j-v_{j+1}) \int_{0}^{\frac{j}{K}}h_{\pmb{x}}(\theta)\mathrm{d\,}\theta.
\end{equation}
Let us denote $L_j(\pmb{x})=\int_{0}^{\frac{j}{K}}h_{\pmb{x}}(\theta)\mathrm{d\,}\theta$ and $v'_j=v_j-v_{j+1}$, $j\in [K]$. Observe that $v'_j\geq 0$ for all $j\in [K+1]$, by the assumption that $v_1\geq v_2 \geq \dots \geq v_K$.
We are now ready to design a MIP formulation. In order to do this we adopt the idea from~\cite{OS09}. Observe first that the value of $L_j(\pmb{x})$ for a fixed $\pmb{x}$ can be computed by solving the following linear programming problem:
\begin{equation}
\label{mip1a}
	\begin{array}{llll}
		\max &  \sum_{k\in [K]} z_k F(\pmb{x},\pmb{c}_k) \\
			& \sum_{k\in [K]} z_k=\frac{j}{K}\\
			&0\leq z_k \leq p_k & k\in [K]
	\end{array}
\end{equation}
Indeed, $L_j(\pmb{x})$ can be computed in a greedy way. Let $\sigma$ be such that 
$F(\pmb{x},\pmb{c}_{\sigma(1)})\geq \cdots \geq F(\pmb{x},\pmb{c}_{\sigma(K)})$. We first allocate to the interval $[0,j/K]$ the largest possible portion of $p_{\sigma(1)}$, then the largest possible portion of $p_{\sigma(2)}$ etc., until $[0,j/K]$ is completely filled. This is equivalent to solving~(\ref{mip1a}).
The dual to~(\ref{mip1a}) for a fixed $\pmb{x}$ and $j$ takes the following form:
\begin{equation}
\label{mip1b}
	\begin{array}{llll}
		\min &  \frac{j}{K}\beta_j +\sum_{i\in [K]} p_i\alpha_{ij} \\
		& \beta_j+\alpha_{ij}\geq F(\pmb{x},\pmb{c}_i) & i \in [K]\\
		& \alpha_{ij}\geq 0 & i\in [K]
	\end{array}
\end{equation}
The strong duality theorem implies that $L_j(\pmb{x})$ equals the optimal objective value of~(\ref{mip1b}). Using~(\ref{formw}) and~(\ref{mip1b}) we get that \textsc{Min-Wowa}~$\mathcal{P}$ is equivalent to the following problem:
$$
	\begin{array}{lll}
		\min & K\cdot \sum_{j\in [K]} v_j' (\frac{j}{K} \beta_j + \sum_{i\in [K]} p_i\alpha_{ij}) \\
		& \beta_j+\alpha_{ij}\geq F(\pmb{x},\pmb{c}_i) & i \in [K], j\in [K]\\
		& \alpha_{ij}\geq 0 & i\in [K], j\in [K] \\
		& (x_1,\dots,x_n)\in \chi(\Phi)
	\end{array}
$$
We obtain a MIP formulation by substituting $F(\pmb{x},\pmb{c}_i)=\sum_{k\in [n]} x_i c_{ik}$ and replacing the expression $(x_1,\dots,x_n)\in \chi(\Phi)$ with a system of linear constraints involving $x_1,\dots,x_n$. In the next section we will apply the MIP formulation to a sample problem.

\section{Computational tests}
\label{secExp}
In this section we present the results of some computational tests. The tests were performed for two particular discrete optimization problems, namely the \textsc{Selection} and \textsc{Assignment} problems. We wish to verify the following two questions:
\begin{enumerate}
	\item How efficient is the MIP formulation, i.e. how the computation time, required to solve the MIP model, depends on the number of elements~$n$ in the set~$E$, the number of scenarios~$K$ and the weight distribution in $\pmb{v}$?
	\item What is the quality of the approximation algorithm designed in Section~\ref{secAppr}?
\end{enumerate}

For both problems we used the following method of generating the tested instances.
 For each scenario $j\in [K]$ we chose~$a_j$, which is
 a random integer uniformly distributed in $[1,100]$, and
  then set $p_j=a_j/\sum_{i\in [K]} a_i$, $j\in [K]$, obtaining a positive probability for each scenario.
  In order to fix the weights $v_1,\dots,v_K$ for scenarios, we used  generating function $g_{\alpha}(z)=\frac{1}{1-\alpha}(1-\alpha^z)$ where $\alpha\in (0,1)$ is a fixed parameter. Notice that $g_{\alpha}(z)$ is concave and is such that $g_{\alpha}(0)=0$, $g_{\alpha}(1)=1$. Given $\alpha$ and $K$, we set $v_j=g_{\alpha}(j/K)-g_{\alpha}((j-1)/K)$ for $j\in [K]$ (see Figure~\ref{figalpha}). 
  The value of $\alpha$ expresses an attitude of the decision maker towards  risk.
 The smaller  the value of $\alpha$, the more risk averse the decision maker is
  (the less uniform is the weight distribution $v_1,\dots, v_K$). 
  For each generated instance the CPLEX 12.5 solver with standard settings was used to solve the corresponding MIP formulation. We  fixed the time limit to 3600 seconds. The solver was executed on a computer equipped with  a 2.5 GHz processor with 8~GB RAM.
\begin{figure}[bht]
\centering
\includegraphics[scale=.5]{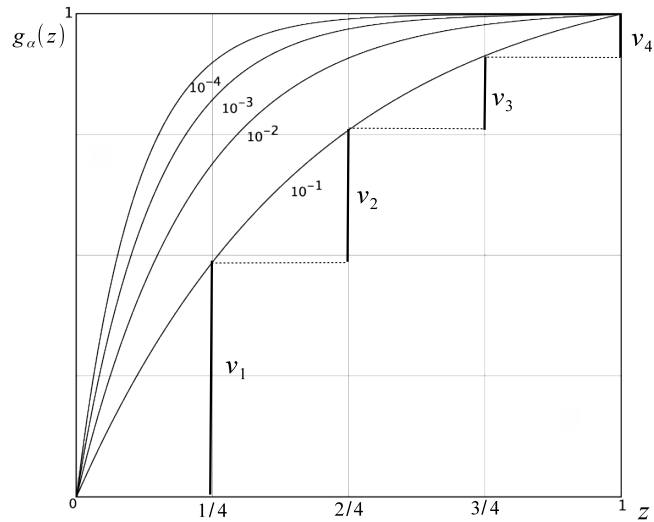}
\caption{Generating function $g_{\alpha}(z)$ for $\alpha\in\{10^{-1}, 10^{-2}, 10^{-3}, 10^{-4}\}$ and the weights $v_1, \dots, v_4$ for $K=4$ and $\alpha=10^{-1}$.}
\label{figalpha}      
\end{figure}

\subsection{The selection problem}

In this section we apply the MIP formulation and the approximation algorithm designed in Section~\ref{secAppr} to the following \textsc{Selection} problem. Assume that $E$ is a set of $n$ items and we wish to choose exactly $q$ of them to minimize the total cost. Hence $\Phi=\{X\subseteq E: |X|=q\}$. The set of characteristic vectors $\chi(\Phi)$ can be described by one constraint of the form $x_1+\dots +x_n=q$, where $x_1,\dots, x_n$ are binary variables associated with the items in $E$. The \textsc{Selection} problem has been recently discussed in a number of papers. Its min-max version has been proven to be NP-hard for two scenarios~\cite{AV01}, strongly NP-hard and hard to approximate within any constant factor when the number of scenarios is  part of the input~\cite{KZ13}.  Hence the same negative results hold for \textsc{Min-Wowa}~$\mathcal{P}$. 

We performed the tests for 
the number of items~$n$ chosen from the set
$\{160, 200\}$, the number of scenarios~$K$
chosen from the set 
$\{2,\dots,20\}$, and the parameter~$\alpha$  chosen from the set $\{10^{-2}, 10^{-3}, 10^{-4}\}$. We also fixed $q=0.25n$, i.e. we assumed that exactly 25\% of the items must be chosen. Under each scenario the cost of item $e_i$ is an integer that is chosen randomly with uniform distribution from the set $\{0,\dots, 100\}$. For each combination of $n$, $K$ and $\alpha$ we have generated 10 random instances.  We first applied the MIP formulation to obtain the optimal solutions for the instances. The computational times required by CPLEX to solve them are shown in Figures~\ref{fig1ex} and~\ref{fig2ex}. It can be observed that the computational times quickly grow with the number of scenarios. The problem is also harder to solve for smaller values of $\alpha$. We were unable to solve any instance with $n=200$, $K=20$, and $\alpha=10^{-4}$ within the time limit of 3600~s.
\begin{figure}[ht]
\centering
\includegraphics[scale=.6]{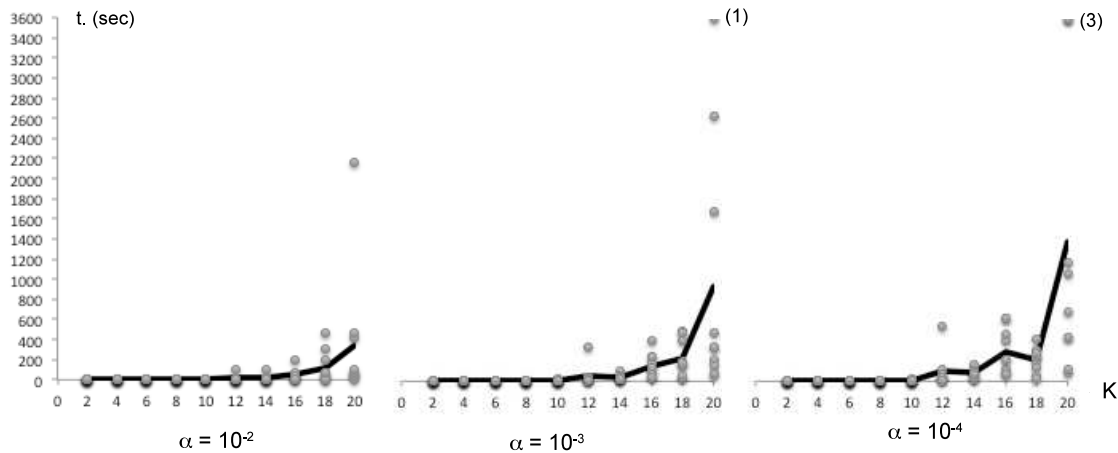}
\caption{
Computational results for  \textsc{Min-WOWA Selection} --
running times for $n=160$ and all combinations of $\alpha$ and $K$. The solid line shows the average computational time. The numbers in brackets show the number of instances which were not solved within 3600~s.}
\label{fig1ex}       
\end{figure}
\begin{figure}[ht]
\centering
\includegraphics[scale=.6]{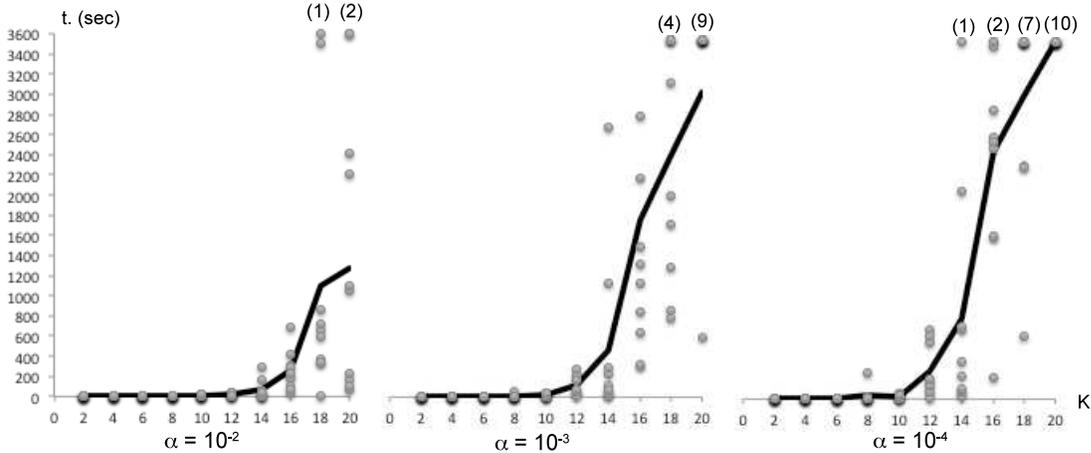}
\caption{Computational results for  \textsc{Min-WOWA Selection} -- running times for $n=200$ and all combinations of $\alpha$ and $K$. The solid line shows the average computational time. The numbers in brackets show the number of instances which were not solved within 3600~s.}
\label{fig2ex}       
\end{figure}

We next applied the approximation algorithm, designed in Section~\ref{secAppr}, to the generated instances. The obtained results are shown in Figures~\ref{fig3ex} and~\ref{fig4aex}.

\begin{figure}[ht]
\centering
\includegraphics[scale=.6]{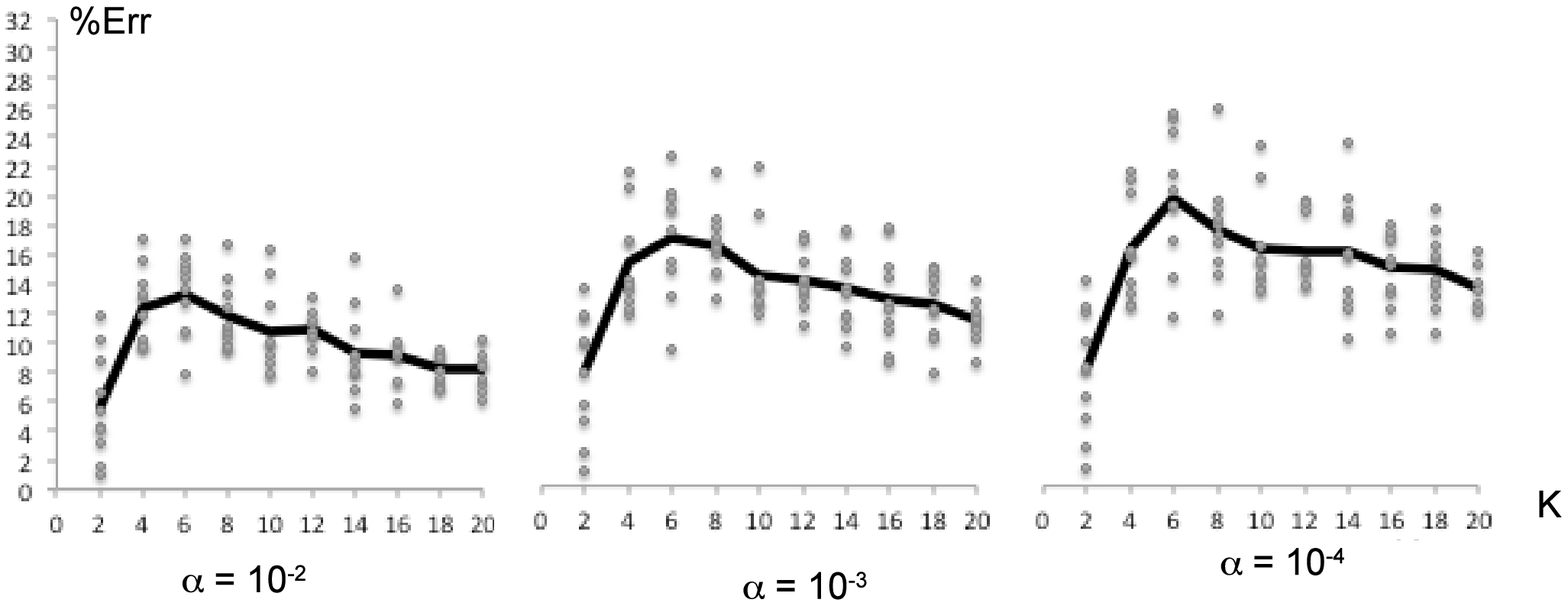}
\caption{The performance of the approximation algorithm 
 run on the instances of
 \textsc{Min-WOWA Selection} --
percentage deviations from the optimum for $n=160$ and all combinations of $\alpha$ and $K$. The solid line shows the average deviation.}
\label{fig3ex}      
\end{figure}
\begin{figure}[ht]
\centering
\includegraphics[scale=.6]{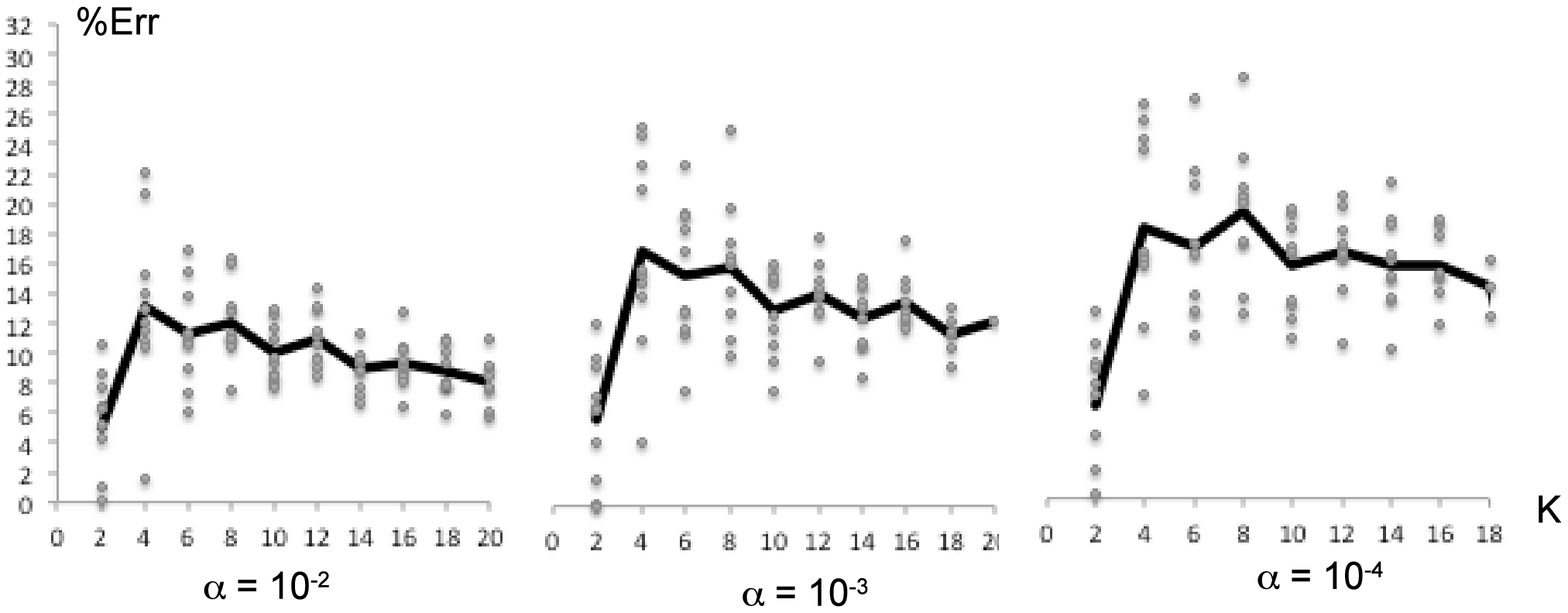}
\caption{The performance of the approximation algorithm 
 run on the instances of
 \textsc{Min-WOWA Selection} --
percentage deviations from the optimum for $n=200$ and all combinations of $\alpha$ and $K$. The solid line shows the average deviation.}
\label{fig4aex}     
\end{figure}

For each instance, for which an optimal solution was known, we computed the percentage deviation of the cost of the approximate solution from the optimum. The quality of the solutions returned by the approximation algorithm seems to be good in comparison with the worst theoretical performance.  The largest reported deviations from optimum are not greater than 32\%, whereas the largest theoretical deviation varies from about 90\% (for $K=2$ and $\alpha=10^{-2}$) to about 638\% (for $K=20$ and $\alpha=10^{-4}$). We can thus obtain reasonable solutions as long as the distribution of the costs under scenarios is uniform. It is interesting that the deviation from the optimum depends more on $\alpha$ than on $K$. The performance of the approximation algorithm is clearly better for larger $\alpha$, i.e when WOWA is closer to the expected value. On the other hand, for a fixed $\alpha$, the performance is significantly better only for $K=2$. For $K=4$ it becomes  worse. Interestingly, one can observe a slightly better performance when $K$ increases from~4 to~20.

\subsection{The assignment problem}

In this section we apply the MIP formulation and the approximation algorithm designed in Section~\ref{secAppr} to the following \textsc{Assignment} problem. We are given a bipartite network $G=(V_1\cup V_2, E)$, where $V_1 \cap V_2=\emptyset$, $|V_1|=|V_2|=m$, and $E=V_1 \times V_2$, $|E|=n=m^2$. Set $\Phi$ contains all subsets of $E$ which form a perfect matching (assignment) in $G$. We can associate a binary variable $x_{ij}$ with each element
(edge) $e_{ij}\in E$ and the set of characteristic vectors $\chi(\Phi)$ can be described by the assignment constraints of the form $\sum_{i\in [m]} x_{ij}=1$ for $j\in [m]$ and $\sum_{j\in [m]} x_{ij}=1$ for $i\in [m]$. The min-max version of the \textsc{Assignment} problem is known to be strongly NP-hard 
and   hard to approximate within any constant factor~\cite{KZ09}.

We performed the tests for
the number of nodes~$m$ ($|V_1|=|V_2|=m$)  chosen from the set
 $\{40, 50\}$, the number of scenarios~$K$
chosen from the set 
 $\{2,\dots,20\}$, and the parameter~$\alpha$  chosen from the set 
 $\{10^{-2}, 10^{-3}, 10^{-4}\}$. Observe that the cardinality of $E$ was 1600 and 2500, respectively. Under each scenario the cost of element~$e_{ij}$ is an integer that is chosen randomly with uniform distribution from the set $\{0,\dots, 100\}$. For each combination of $m$, $K$ and $\alpha$ we have generated 10 random instances.  We first applied the MIP formulation to obtain the optimal solutions for the instances. The computational times required by CPLEX to solve them are shown in Figures~\ref{fig4ex} and~\ref{fig5ex}. One can notice that the computational times quickly grow with the number of scenarios. The problem is also harder to solve for smaller values of $\alpha$. For  $K=10$ and $m=50$ some instances could not be solved within the time limit of 3600~s.
\begin{figure}[ht]
\centering
\includegraphics[scale=.6]{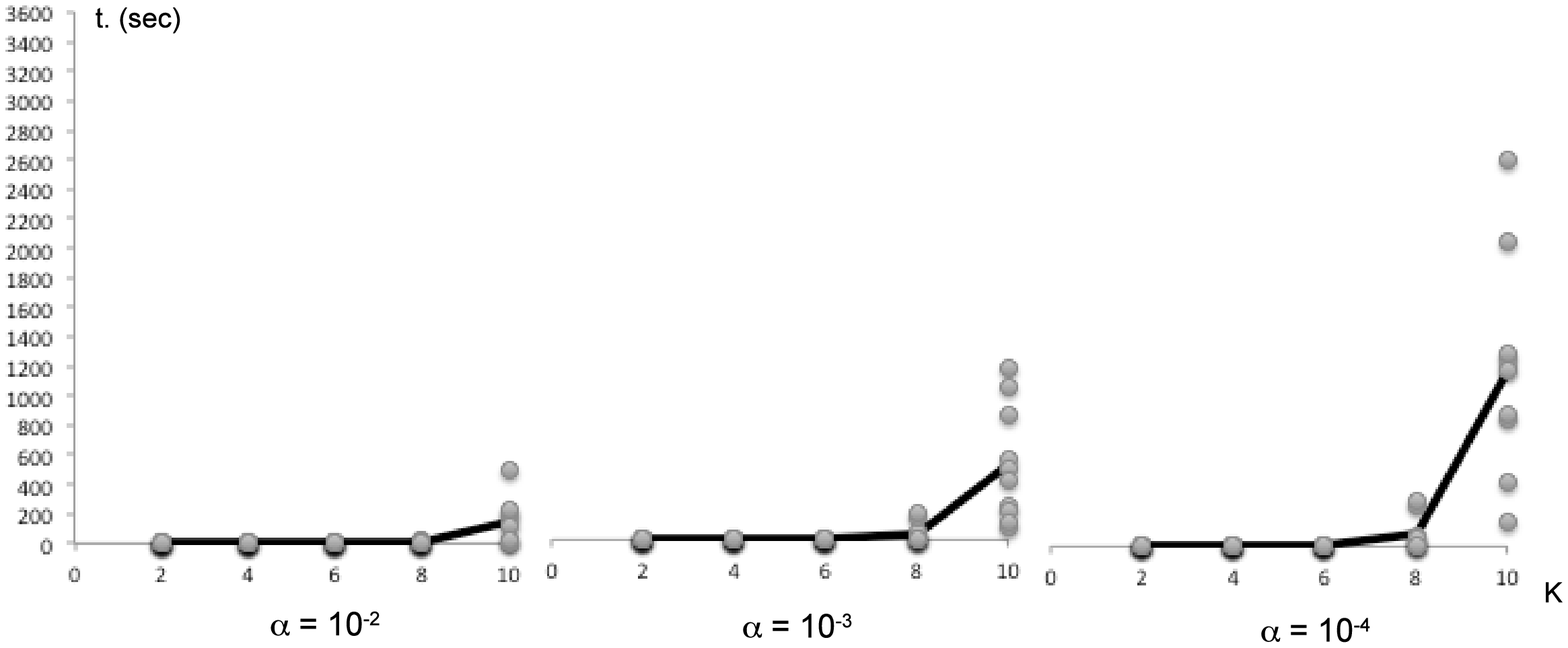}
\caption{Computational results for  \textsc{Min-WOWA  Assignment} --
running times for $m=40$ and all combinations of $\alpha$ and $K$. The solid line shows the average computational time.}
\label{fig4ex}       
\end{figure}
\begin{figure}[ht]
\centering
\includegraphics[scale=.6]{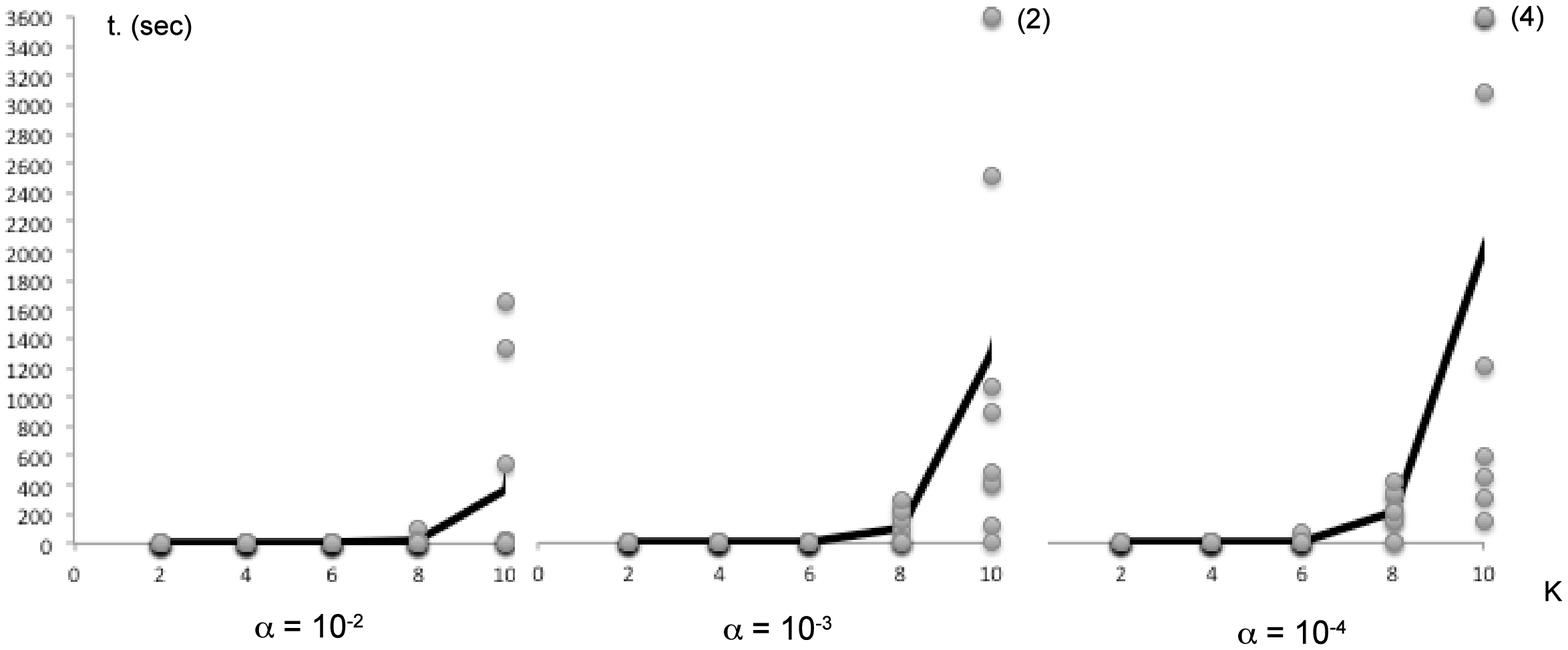}
\caption{Computational results for  \textsc{Min-WOWA Assignment} --
running times for $m=50$ and all combinations of $\alpha$ and $K$. The solid line shows the average computational time. The numbers in brackets show the number of instances which were not solved within 3600~s.}
\label{fig5ex}   
\end{figure}

We next applied the approximation algorithm, constructed in Section~\ref{secAppr}, to the generated instances. The computational results are presented in Figures~\ref{fig6ex} and~\ref{fig7ex}. We can derive similar conclusions as for the \textsc{Selection} problem. The largest deviation from optimum reported was about 30\%, which is much less than the worst theoretical performance which varies from about 90\% (for $K=2$ and $\alpha=10^{-2}$) to about 501\% (for $K=10$ and $\alpha=10^{-4}$). As for the \textsc{Selection} problem, the deviations from optimum is significantly smaller only for $K=2$.
\begin{figure}[ht]
\centering
\includegraphics[scale=.6]{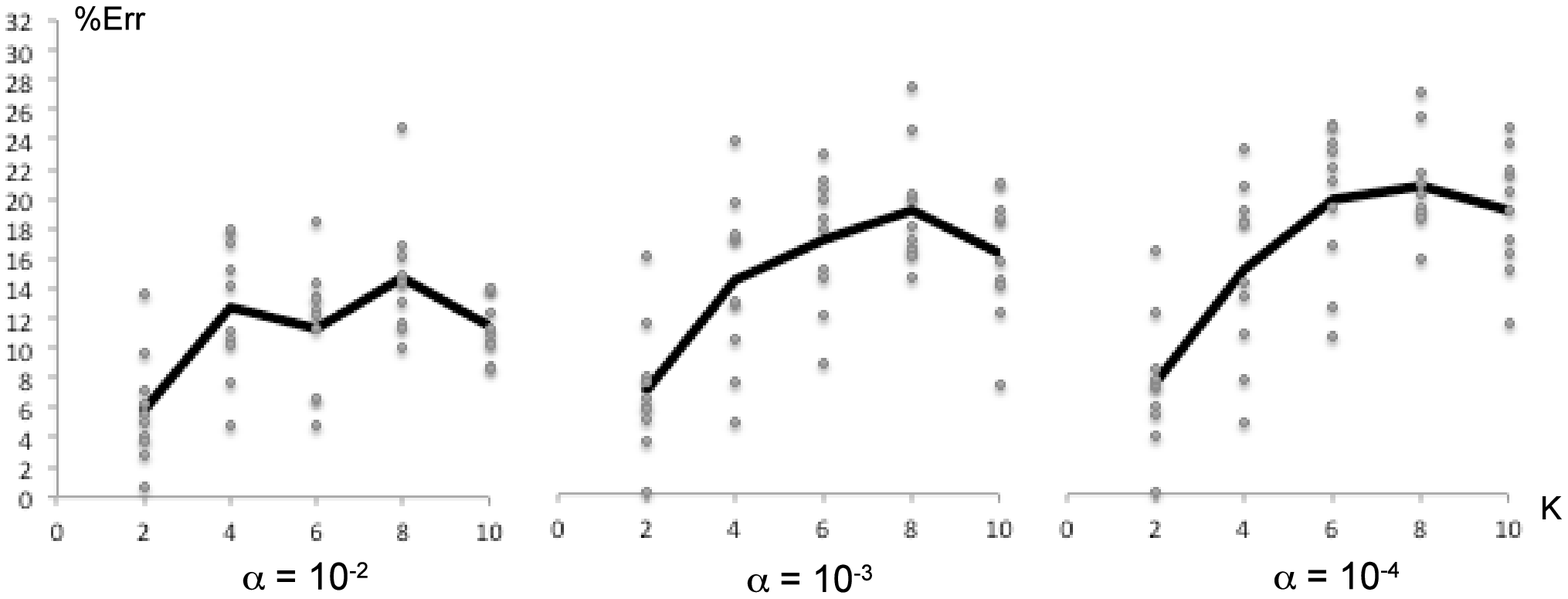}
\caption{The performance of the approximation algorithm 
 run on the instances of
  \textsc{Min-WOWA  Assignment} --
percentage deviations from the optimum for $m=40$ and all combinations of $\alpha$ and $K$. The solid line shows the average deviation.}
\label{fig6ex}       
\end{figure}
\begin{figure}[ht]
\centering
\includegraphics[scale=.6]{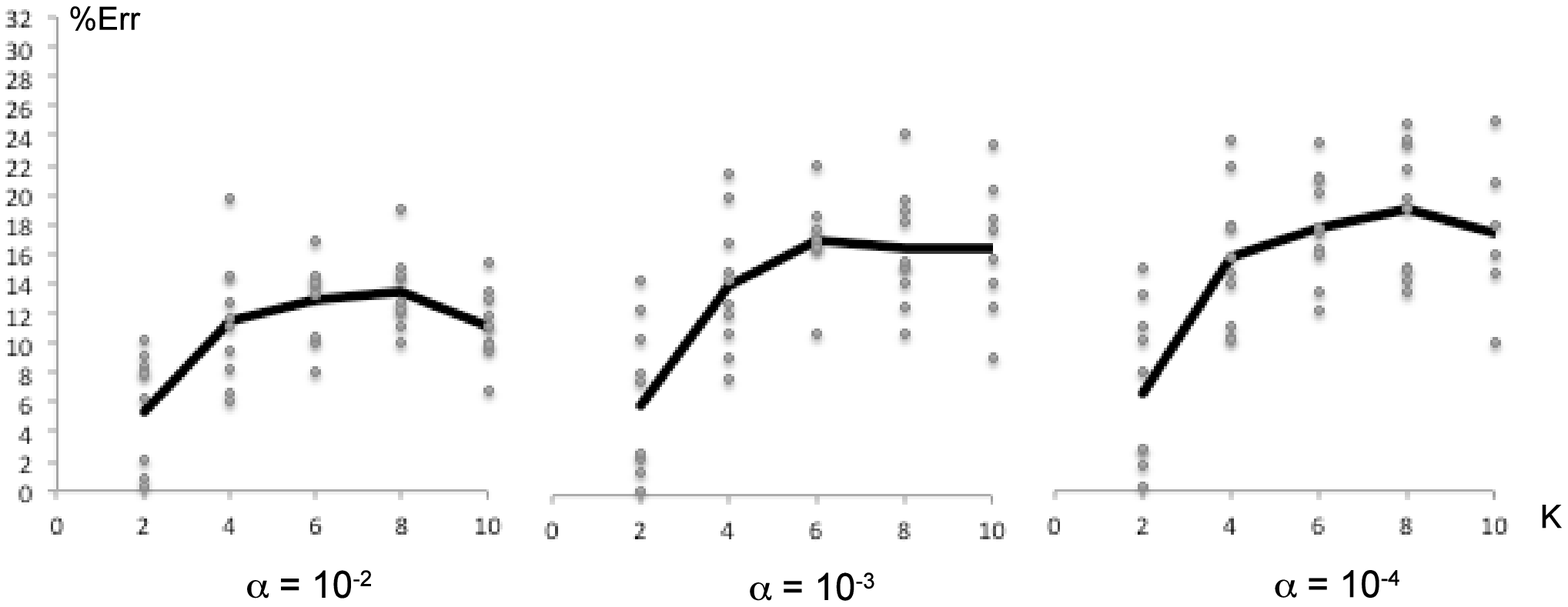}
\caption{he performance of the approximation algorithm 
 run on the instances of
  \textsc{Min-WOWA  Assignment} --
percentage deviations from the optimum for $m=50$ and all combinations of $\alpha$ and $K$. The solid line shows the average deviation.}
\label{fig7ex}       
\end{figure}

\section{Conclusions}

In this paper we have discussed a wide class of discrete optimization problems in which the uncertain costs are specified in the form of a discrete scenario set. 
A probability distribution over this set of scenarios set is provided. We have applied the weighted OWA criterion to choose a solution. This criterion allows us to take both scenario probabilities and attitude of decision makers towards a risk into account, as the weights assigned to scenarios are distorted (rank dependent) probabilities. Our approach contains the traditional robust (min-max) and stochastic approaches as special cases. The problem of minimizing the WOWA criterion is typically NP-hard for two scenarios. It becomes strongly NP-hard and also hard to approximate when the number of scenarios is  part of the input. It is thus important to provide efficient approximation algorithms for the problem. One such an algorithm has been constructed in this paper.
 It can be applied, if  the underlying deterministic problem is polynomially solvable. The efficiency of the MIP formulation and the quality of the approximation algorithm were tested for two particular problems, namely the selection and the assignment problems. The MIP formulation can be used when the number of scenarios is small. For larger number of scenarios the approximation algorithm may be an attractive choice. The performance of the approximation algorithm seems to be good when the element costs under scenarios are chosen randomly. It may be poorer for  more correlated costs and investigating the quality of the algorithm in this case requires additional tests.

\subsubsection*{Acknowledgements}
The authors would like to thank the anonymous reviewers for their valuable comments and suggestions to
 improve the quality of the paper.  This work is supported by
 the National Center for Science (Narodowe Centrum Nauki), grant  2013/09/B/ST6/01525.

%\bibliographystyle{abbrv} 
%\bibliography{wowa} 

%%%%%%%%%%%%%%%%%
\end{document}